\documentclass[11pt]{article}
\usepackage[utf8]{inputenc}
\usepackage{fullpage}
\usepackage[margin=1in, letterpaper]{geometry}
\usepackage{framed}
\usepackage{my_package}
\usepackage{multirow}
\usepackage{microtype}
\usepackage{XCharter}
\usepackage{dirtytalk}
\usepackage{paralist}
\usepackage{thm-restate}
\usepackage{tcolorbox}
\usepackage{complexity}
\usepackage{enumitem}

\definecolor{WhiteSmoke}{rgb}{0.96, 0.96, 0.96}
\definecolor{shadecolor}{named}{WhiteSmoke}

\newcommand{\MLS}{\textsc{Ordered $\alpha$-Local }\allowbreak\textsc{Search}}
\newcommand{\HMLS}{\textsc{Hybrid Ordered $\alpha$-Local }\allowbreak\textsc{Search}}

\newcommand{\ow}{{\bar{w}}}

\newcommand{\lt}{<}
\newcommand{\gt}{>}

\newcommand{\Anear}{A^{\textrm{near}}}

\newcommand{\Dnear}{D^{\textrm{near}}}
\newcommand{\Dfar}{D^{\textrm{far}}}

\newcommand{\Orem}{O^{\textrm{rem}}}

\newcommand{\Sfarunst}{S^{\textrm{far-unstable}}}
\newcommand{\Sfarst}{S^{\textrm{far-stable}}}
\newcommand{\Snear}{S^{\textrm{near}}}
\newcommand{\btheta}{\bar\theta}

\title{Submodular Maximization over Many Matroids via Ordered Local Search}

\author{
Neta Singer \thanks{Ecole Polytechnique Fédérale de Lausanne
(Email: \href{mailto:neta.singer@epfl.ch}{neta.singer@epfl.ch})}
\and Theophile Thiery \thanks{Department of Computer Science, ETH Zurich, Zurich, (Email:
\href{mailto:theophile.thiery@inf.ethz.ch}{theophile.thiery@inf.ethz.ch})}
}

\begin{document}
\maketitle
\pagestyle{plain}
\begin{abstract}
    Given a monotone submodular function, we consider the problem of finding a maximum-valued set in the intersection of $k$ matroids.
    Our main result is a polynomial time local search based algorithm achieving a $\frac{k}{2} + o(k)$ approximation guarantee.
    This asymptotically matches the best-known guarantee of $\frac{k}{2} + \e$ in the unweighted setting by Lee, Sviridenko, and Vondrák (2009). 
    Prior to this work, the state-of-the-art was a $\frac{\ln(4)k}{1+\ln(2)} + o(k)$-approximation algorithm obtained by Feldman and Ward (2026).
    Our approach extends to Matroid $k$-Parity yielding the same approximation guarantee.

    In contrast to the weight bucketing approach underlying the recent advances of Singer and Thiery (2025) and Feldman and Ward (2026), our algorithm processes elements greedily in decreasing order of marginal value and searches for sufficiently profitable swaps, whose gain exceeds a parameter $\alpha$ given as a function of $k$. 
    %This approaches a $k/2$-approximation for submodular maximization over a matroid intersection as $k$ grows.
    We further combine this idea with the weight bucketing approach to obtain improved guarantees for weighted $k$-Set Packing. Our second main result is a $\frac{\ln(4)k}{3} + o(k)$-approximation algorithm for weighted $k$-Set Packing, improving on the state of the art $\frac{k}{2.00561} + O(1)$-approximation by Neuwohner (2023).
\end{abstract}
\newpage
\tableofcontents
\newpage

    \section{Introduction}

    Maximizing submodular and linear functions subject to a $k$-matroid intersection constraint is a central topic in combinatorial optimization. 
    When $k = 1$, \textsc{Greedy} is optimal for maximizing linear functions, and celebrated algorithms by Calinescu, Chekuri, P\'al, and Vondr\'ak \cite{Calinescu-Chekuri-Pal-Vondrak:2011:Maximizing} and Filmus and Ward \cite{Filmus:2014:Monotone, Buchbinder-Feldman:2024:Deterministic} are optimal for monotone submodular functions.
    For $k = 2$, Edmonds' matroid intersection algorithm outputs an optimal solution for linear objective functions \cite{Edmonds:2003:Submodular}.
    Such algorithms imply the existence of optimal methods for finding maximum weight spanning tree, matchings, disjoint arborescences, and for maximizing welfare in the allocation of goods to agents since they are special cases of linear and submodular maximization over matroid constraints \cite{Lee-Ryan:1992:Matroid, Calinescu-Chekuri-Pal-Vondrak:2011:Maximizing}. \\

    In this paper, we consider a $k$-Matroid Intersection constraint for $k \geq 3$. Until recent years, no multiplicative improvement over \textsc{Greedy}'s guarantee was known even for linear objectives beyond cardinality maximization.
    \textsc{Greedy} achieves a $k$-approximation for linear functions, and $(k+1)$-approximation for monotone submodular functions \cite{Nemhauser-Wolsey-Fisher:1978:Analysis}.
    A linear dependence on $k$ is needed. Hazan, Safra, and Schwartz \cite{Hazan-Safra-Schwartz:2006:Complexity} first showed that the problem is $\NP$-hard to approximate within a factor of $\Omega(k/\log(k))$.
    This was recently strengthened by Lee, Svensson, and Thiery \cite{Lee:2024:Asymptotically}, who obtained a $k/12$-hardness result, and further by Minzer and Zheng \cite{Minzer-Zhe_Zheng:2025:Near-Optimal} to $k/8$, assuming $\P \neq \NP$. \\

    For cardinality maximization, there exists a $\frac{k}{2} + \e$-approximation algorithm for any constant $\e \gt 0$ \cite{Lee:2010:Matroid}.
    The general linear case is in fact more complex since global counting arguments, which are used in the cardinality case, are no longer sufficient.
    Singer and Thiery recently overcame this barrier and presented the first multiplicative improvement over \textsc{Greedy} \cite{Singer-Thiery:2025:Better} for linear objectives.
    They designed a $\frac{k}{\ln(4)} + O(1)$-approximation algorithm improving over the former $(k-1)$-approximation algorithm by Lee, Sviridenko, and Vondrák \cite{Lee:2010:Submodular}. Building on these ideas, Feldman and Ward \cite{Feldman-Ward:2026:Submodular} extended the framework to monotone submodular maximization, obtaining approximation ratios of $k\ln(2)+O(1)$ and $\frac{k\ln(4)}{\ln(2)+1}+o(k)$ for linear and monotone submodular objectives, respectively.
    Despite this progress, these works leave open the question of bridging the gap between the unweighted and the weighted/submodular settings. In this work, we close this gap fully up to additive terms.
    \begin{restatable}{mythm}{thmone}
        \label{thm:main-approximation-guarantee}
        There exists a $\frac{k}{2} + o(k)$-approximation algorithm for maximizing a monotone submodular function subject to a $k$-matroid intersection constraint.
    \end{restatable}
    Our algorithm is the first to achieve an approximation ratio for monotone submodular functions that asymptotically matches the best unweighted approximation guarantee \cite{Lee:2010:Matroid}. We remark that, similarly to \cite{Lee:2010:Matroid, Singer-Thiery:2025:Better, Feldman-Ward:2026:Submodular} our guarantees hold for the more general Matroid $k$-Parity constraint. This leads to improved guarantees for \emph{Bounded Degree Maximum Independent Set} considered recently by Bai, B{\'{e}}rczi, and Siemelink \cite{Bai-Berczi-Siemelink:2025:Approximating} which can be cast as a Matroid $k$-Parity constraint for example. Further applications can be found in \cite{Cheung-Lau-Leung:2014:Algebraic}.
    This result settles the gap for large values of $k$ and leaves as exciting open lines of research the question of finding a matching guarantee for low values of $k$, and of improving over the unweighted guarantees. \\

    Notably, our algorithm is a simple local search-based method. However, we make a crucial observation that the local search performs better when $(1)$ local swaps are processed in decreasing order over the edges' weight (marginal contribution in the submodular setting), and $(2)$ only swaps that make significant gain, as parametrized by a function in $k$, are taken. Variants of this greedy processing of the local search order have been employed in \cite{Chandra-Halldorsson:1999:Greedy, Singer-Thiery:2025:Better, Feldman-Ward:2026:Submodular}.\\

    We further leverage our techniques to obtain an improved approximation ratio for $k$-Set Packing.
    This problem is a special case of $k$-Matroid Intersection with a long line of research gradually improving on approximation guarantees \cite{Hurkens:1989:Size, Cygan:2013:Improved,Arkin:1998:Local,Halldorson:1995:Approximating, Chandra-Halldorsson:1999:Greedy, Berman:2000:Approximation, Berman-Krysta:2003:Optimizing, Neuwohner:2021:Improved, Neuwohner:2022:Limits,Neuwohner:2023:Passing, Thiery:2023:Improved}. For this special case, better approximation ratios than $\frac{k}{2}$ were already known in the weighted setting. The best guarantee for weighted $k$-Set Packing is a $\frac{k}{2.00561} + O(1)$-approximation due to Neuwohner \cite{Neuwohner:2023:Passing}. Her algorithm employs Cygan's \cite{Cygan:2013:Improved} $\frac{k+1}{3}$-approximation algorithm for the unweighted problem to overcome the $\frac{k}{2}$-barrier.
    In this work, we combine our asymptotic $\frac{k+o(k)}{2}$-approximation algorithm with the algorithm of Singer and Thiery \cite{Singer-Thiery:2025:Better} (using the existence of a $\frac{k+1}{3}$-approximation for the unweighted problem) to improve upon Neuwohner's result \cite{Neuwohner:2023:Passing}. We achieve an improved approximation guarantee of $\frac{k}{2.16404} + o(k)$ for linear objectives, and our method provides an alternative to those based on Berman's $w^2$ local search algorithm \cite{Berman:2000:Approximation,Neuwohner:2021:Improved, Neuwohner:2022:Limits, Thiery:2023:Improved, Neuwohner:2023:Passing}.
    \begin{mythm}
        \label{thm:intro-2}
        There exists a $\frac{\ln(4)k}{3}+ o(k)$-approximation algorithm for weighted $k$-Set Packing.
    \end{mythm}

    \subsection{Overview of our Approach}
    Our main algorithm is inspired by the method introduced in \cite{Singer-Thiery:2025:Better} and continued in \cite{Feldman-Ward:2026:Submodular}. These advances interpolate between greedy and local search.
    They randomly partition the instance into almost unweighted weight buckets, process them greedily, and run the state-of-the-art unweighted algorithm on each weight class.
    Unfortunately, this approach suffers from the inherent tradeoff between the approximation obtained in a single weight bucket and the loss incurred by committing to the greedy order over the buckets. \\

    In contrast, our algorithm avoids bucketing altogether and processes the edges in decreasing order of their weight in a \emph{continuous} fashion. Concretely, we perform a local-search parametrized by values $\alpha$, and $\theta$. The algorithm processes all edges of \textit{weight} at least $\theta$ in decreasing order of $\theta$ and makes all local swaps that increase the value by at least $\alpha$. An $\alpha$-swap adds a set of edges whose weight is at least $\alpha$ times that of those removed. Thus when no $\alpha$-swap can be found, the threshold $\theta$ is decreased so edges of smaller weight become eligible for local search.
    When the objective function is submodular, edge \textit{weights} are defined by an auxiliary weight function akin to \cite{Feldman-Ward:2026:Submodular}.

    \subsection{Techniques}
    We now give some intuition as to why the above method achieves a $\frac{k}{2} + o(k)$-guarantee and for simplicity assume we work with a $k$-set packing instance and a linear function. Typical for local search methods, we will construct a sequence of non-improving local swaps, and aim to aggregate them to derive an approximation guarantee. 
    In particular, we consider $O_\theta$ to be the edges of $\opt$ of weight $\theta$ and consider our algorithm's solution $A_\theta$ after processing all edges of weight at least $\theta$. 
    A typical unweighted analysis shows that the weight of the neighborhood of $O_\theta$ in $A_\theta$ is at least a $\frac{2}{k+\alpha}$ portion of the weight $O_\theta$. We then consider two sets: (1) the edges $o \in O_\theta$ whose neighborhood remains in the final output after processing all weights, or (2) the edges $o \in O_\theta$ whose neighborhood gets discarded in some local search iteration preceding the final output. \\
    
    Aggregating over edges for which the first case happens, we can intuitively recover the $\frac{k + \alpha}{2}$-guarantee. This guarantee is successful for small enough $\alpha$.
    In the second case, each discarded neighbor implies the existence of an $\alpha$-swap greatly increasing the value of our algorithm's solution. We prove that the weight of $O$ for which (2) happens is bounded by $\frac{k}{\alpha - 1}f(A)$, where $A$ is the final solution. This fact is reminiscent of the analyses of \cite{Chandra-Halldorsson:1999:Greedy, Chekuri-Gupta-Quanrud:2015:Streaming}. This requires $\alpha$ to be a large parameter to minimize the effect of (2).
    Combining these two facts proves our main theorem when $\alpha = \theta(\sqrt{k\log(k)})$. 

        \paragraph{Dealing with Matroid $k$-Parity:} There is a technical complication when dealing with a matroid $k$-parity constraint. Indeed, while the choice of neighbors is direct for $k$-Set Packing as hyperedges that cover the same vertices, our proof proceeds by fractionally assigning neighborhoods via the matroid independence oracle. 
        More precisely, consider some $o \in O$ such that $w_o = \theta_0$. Then, we can decompose $\theta_0 = \sum_{i} (\theta_{i-1} - \theta_i)$ for all thresholds $\theta_i$ considered by the algorithm. Then, we assign a $\theta_{i-1}  - \theta_i$ portion of $o$'s weight to its neighborhood in the round that threshold $\theta_i$ is introduced. 
        We therefore do not need to explicitly keep track of neighborhoods and can instead naturally integrate over $\theta$.

        \paragraph{Dealing with Submodularity:}
        When $f$ is a submodular set function, edges no longer have a fixed weight.  This is handled by Feldman and Ward \cite{Feldman-Ward:2026:Submodular} by processing edges in order of decreasing marginal contribution with respect to the algorithm's current solution. They further maintain an ordering $\prec$ of the ground set to define auxiliary weights of edges in the algorithm's solution.
        In the method of \cite{Feldman-Ward:2026:Submodular}, the marginal contribution of an edge can only decrease over time since the solution is built greedily.
        However, our algorithm does not have this property, as edges can enter, exit, and reenter our solution over time.
        For this reason, we maintain an ordered set $(X, \prec)$ of edges that were taken into the solution at any iteration of our algorithm. We then evaluate the marginal contribution of an edge with respect to $X$, a superset of all the current solution. A local $\alpha$-swap is then defined by comparing the auxiliary weight of the added and discarded edges.
        
        \paragraph{Weighted $k$-Set Packing.} In the special variant of weighted $k$-set packing, we go beyond $k/2$ and obtain a $\frac{\ln(4)k}{3}$-approximation. To obtain this approximation, we combine the algorithm of \cite{Singer-Thiery:2025:Better} with our $\alpha$-local search as follows: we construct a random partitioning of edges into weight buckets as in \cite{Singer-Thiery:2025:Better}, then we process the weight buckets in decreasing weight order, applying the state-of-the-art $k/3$-approximation algorithm \cite{Cygan:2013:Improved}. However, before processing the next weight bucket, we run our $\alpha$-local search over the entire instance restricted to the weight buckets processed so far. This additional $\alpha$-local search step removes inherent flaws in the greedy processing of weight buckets, wherein higher weight edges can block up to $k$-many optimal edges of slightly lower weight. However, the weight bucketing also guarantees that the approximation goes beyond $k/2$ by interpolating with the $k/3$-guarantee of running the unweighted algorithm.

 \subsection{Paper Organization}

The rest of the paper is organized as follows. In \Cref{sec: prelim} we give necessary notations and definitions regarding matroids and submodular functions. In \Cref{sec: algo submod} we define the $(\theta, \alpha)$-Local Search algorithm and our main submodular maximization algorithm, which iteratively runs $(\theta, \alpha)$-Local Search over decreasing $\theta$. 
In \Cref{sec: analysis of algo 2} and \Cref{sec: matroid swaps} we give preliminaries for the analysis of our submodular maximization algorithm by constructing the necessary matroid exchanges with respect to $\opt$. In \Cref{sec: approx analysis submod} we complete the algorithm analysis aggregating the matroid exchanges into approximation guarantees with respect to the final output. In \Cref{sec: set pack} we define the extended algorithm for weighted $k$-Set Packing and analyze it.

    \section{Preliminaries}\label{sec: prelim}
        For any integer $k \in \NN$, we use the convention $[k] \triangleq \{1, \ldots, k\}$. We assume that the reader is familiar with basic matroid definitions (see \cite{Schrijver:2003:Combinatorial} for a good survey). We introduce the combinatorial constraints that we study in this paper. 

        \paragraph{Matroid $k$-parity:}  We are given a hypergraph $G = (V, E)$ on a set of vertices $V$ and a matroid $\cM = (V, \cI)$ defined on $V$. Each hyperedge $e \in E$ has contains up to $k$ vertices.
        Given a set function $f\colon 2^E \to \mathbb{R}$, the goal is to find a maximum value collection of disjoint hyperedges $M \subseteq E$ such that the vertices incident to $M$ are independent in $\cI$. For any collection of hyperedges $S \subseteq E$, we say $S \in \mathcal{I}$ if $v(S) \in \mathcal{I}$, i.e. the vertices covered by the hyperedges are independent. 
        As shown by Lee, Sviridenko and Vondr\'ak, we may assume that the hyperedges are disjoint (modifying $\cM$) \cite{Lee:2010:Matroid}.

        \paragraph{$k$-Set Packing and $k$-Matroid Intersection:} $k$-Set Packing is a special case of Matroid $k$-Parity, where $\cM$ is the \emph{free} matroid ($\cI = 2^V$). 
        Matroid $k$-Parity also generalizes $k$-Matroid Intersection  which consists of finding a maximum value independent set in $k$ matroids $(E, \cI_1), \ldots, (E, \cI_k)$. The reduction to Matroid $k$-parity is standard by creating $k$ copies $(e, i)_{i \in [k]}$ for each $e \in E$ with a unique hyperedge incident to these $k$ copies, and defining a set $S$ independent in $\cM$ if $\{e \in E \colon (e, i) \in S\} \in \cI_i$ for all $i \in [k]$.

        \paragraph{Set functions:}
            We let $f: 2^E \rightarrow \mathbb{R}$ be a function that assigns a value to every subset of $E$. Given a set $S$ and an element $e$, we define the \emph{marginal contribution} of $e$ with respect to $S$ as $f(e \mid S) \triangleq f(S \cup \{ e \}) - f(S)$.
            Similarly, given a set $T \subseteq E$, we denote by $f(T \mid S) \triangleq f(S \cup T) - f(S)$ the marginal contribution of $T$ to $S$. 
            A set function $f$ is \emph{monotone} if $f(e \mid S) \geq 0$ for every set $S \subseteq E$ and element $e \in E$.
            A set function $f$ is called \emph{submodular} if $f(e \mid S) \geq f(e \mid T)$ for every two sets $S \subseteq T$ and element $e \in E \setminus T$. In other words, $f$ is submodular if the marginal contribution of an element $e$ w.r.t. $S$ can diminish when additional elements are added to the set $S$. 
            A special case of monotone submodular functions is when each element $e \in E$ is assigned a weight $w_e \geq 0$ and $f(S) = \sum_{e \in S} w_e$ for all $S \subseteq E$, and $f(\emptyset) = 0$. In this case, we say that $f$ is \emph{linear}.
            Our algorithms optimize a submodular set function $f$ over a constraint involving matroids. Since both $f$ and $\cM$ may have an exponential size description, we make the standard assumption that $f$ is accessed through a \emph{value oracle} that given a set $S \subseteq E$ returns $f(S)$, and that $\cM$ is accessed via a \emph{independence oracle} that given $S \subseteq E$ returns yes if $v(S) \in \Ical$. \\

         \textbf{Other Notations:} We denote by $W \triangleq \max_{e \in E\colon e\in \cI} f(e)$ the maximum weight of any feasible edge and denote an interval by $I \subseteq [0, W]$. Additionally, given a collection of sets $\{R_i\}_{i = 1}^{L}$, we will denote $R_{\leq i} = \bigcup_{j = 1}^{i} R_j$, and $R_{\geq i} = \bigcup_{j = i}^{L} R_j$.

    \section{Main Algorithm}
    \label{sec: algo submod}
    We introduce our main algorithm for maximizing a monotone submodular function subject to a matroid $k$-parity constraint, described in \Cref{alg:merge-LS}.
    It uses as a subroutine a local-search algorithm (\Cref{alg:local-search}) parametrized by two values $\alpha$, and $\theta$. The parameter $\alpha$ ensures that the ratio of value of edges added to value of edges removed in the solution is at least $\alpha$.
    The parameter $\theta$ restricts the local search to edges whose marginal value is at least $\theta$.
    \Cref{alg:local-search} starts by applying this subroutine with threshold $\theta = W$. The threshold is then decreased continuously while making $\alpha$-improving swaps whenever one can be made.

    \subsection{Defining the $\alpha$-Local Search Subroutine}
    To define this algorithm, we introduce the \emph{incremental value} of an edge and an ordering $\prec$ of the edges of $E$.
    This notion has appeared for example in \cite{Ward:2012:Approximation,Chekuri-Gupta-Quanrud:2015:Streaming, Feldman-Ward:2026:Submodular} as a way to run local search methods over submodular objectives.
    \begin{mydef}[$\prec$-Incremental value]
        \label{def:incremental-value}
        Given a submodular function $f \colon 2^{E} \rightarrow \RR_+$, and an ordering $\prec$ of $E$. We define the $\prec$-incremental value of $e \in E$ with respect to $S \subseteq E$ as:
        \begin{align*}
            w_{\prec}(e, S) & \triangleq f(e \mid \{e' \in S \colon e' \prec e\})\enspace.
        \end{align*}
        Given $S \subseteq E$, we define $S_e^{-} \triangleq \{e' \in S \colon e' \prec e\}$ as the set of elements in $S$ that are ordered before $e$.

        For ease of notation, we sometimes omit the subscript $\prec$ when the order is clear from the choice of the set $S$.
    \end{mydef}

    \Cref{alg:merge-LS} maintains an ordering $\prec$ that is updated over time on a set $X \subseteq E$ of edges that were inserted by any prior local search swap, including those that were subsequently removed. The ordering over $E$ is initially empty, and whenever an edge $e \in E$ is inserted in a local search swap, it is placed into the ordering $\prec$ at the end of the ordering (at the end of the queue). 
    In particular, even if $e$ was already ordered by a prior iteration, if $e$ reenters the solution it is placed at the end of $\prec$. 
    Edges will be inserted via $(\theta, \alpha)$-local swaps described below. 
    
    \begin{mydef}[$(\theta, \alpha, X)$-improvement for $A$]
        \label{def:improvement-submod}
        Let $A \subseteq E$ be a feasible solution.
        Given parameters $\theta > 0$ and $\alpha \geq 1$, and an ordered set $(X, \prec)$ such that $A \subseteq X$, we say that a pair of sets $S \subseteq E \setminus A$ and $N \subseteq A$ is a \emph{$(\theta, \alpha, X)$-improvement for $A$} if $(A \cup S) \setminus N \in \cI$, and if the following condition holds:
        \begin{align}
         |S| = \ell \leq k, |N| \leq \ell k \text{ and there exists an ordering }S = \{s_1, \ldots, s_\ell\}\text{ of }S \text{ such that } \notag\\f(s_j \mid X - \{s_j\} + \{s_1, \ldots, s_{j-1}\}) \geq \theta\text{ for all }j \leq \ell\text{, and }f(S \mid X-S) > \alpha \cdot \sum_{e \in N} w_\prec(e, A). \notag 
         \end{align}
        When $X$ is clear from the context, we will write that a pair $(S, N)$ is a $(\theta, \alpha)$-improvement for $A$.
    \end{mydef}

    This defines a swap where we add a set $S$ of edges with individual marginal value at least $\theta$ and where $f(S \mid X - S)$ is at least $\alpha$-times larger than the incremental values of the discarded edges in order to add in $S$.
    \begin{remark}
        \label{rmk:greedy-addition}
        If there is an edge $e \in E$ such that $A + e \in \cI$, and $f(e \mid X - e) > \theta$ then $(e, \emptyset)$ is a $(\theta, \alpha, X)$-improvement for $A$ for any choice of $\alpha \geq 0$. This simply corresponds to the greedy addition of $e$ to $A$.
    \end{remark}
    \begin{remark}
        \label{rmk:update-ordering}
        If a swap adds multiple edges to the current solution, we set the ordering $\prec$ of the newly added edges according to the one given by \Cref{def:improvement-submod}. In particular, we have $s_1 \prec s_2 \prec \ldots \prec s_\ell$.
    \end{remark}
    \begin{remark}
        We can find $(\theta, \alpha)$-improving swaps in polynomial-time when $k$ is constant.
    \end{remark}
        \begin{algorithm}[ht!]
            \DontPrintSemicolon
            \caption{$(\theta, \alpha)$-\LS$(A, X)$}
            \label{alg:local-search}
            \KwIn{Feasible solution $A \subseteq E$, and $X$ such that $A \subseteq X$, ordering $\prec$ of $X$.}
            \While{there exists $(S, N)$ that is $(\theta, \alpha, X)$-improving swap}{
                $A \leftarrow A \setminus N \cup S$ \hfill \tcp{Improve the solution }
                Update $\prec$ by replacing $s_1 \prec \hdots \prec s_l$ at the end \hfill \tcp{Reorder according to \Cref{rmk:update-ordering}}
                $(X, \prec) \leftarrow (X \cup S, \prec)$ \hfill \tcp{Update $X$ and $\prec$ accordingly}
            }
            \Return $(A, X, \prec)$\;
        \end{algorithm}

    \subsection{Defining Ordered $\alpha$-Local Search}
    The conceptual idea behind our main algorithm, $\MLS$ is to run $(\theta, \alpha)$-$\LS$ in order of decreasing $\theta$. The local-search algorithm runs continuously while the threshold $\theta$ is decreased over time, thereby allowing lower-value edges to become eligible.
    Formally, $\MLS$ (\Cref{alg:merge-LS}) uses $(\theta, \alpha)$-$\LS$ to find all $(\theta, \alpha)$-improving swaps. When none can be found, the threshold $\theta$ is decreased to include the next eligible edge.
    We decrease the threshold by looking at the lowest marginal of any edge in an ordered $k$-tuple. 
    If $(A, X)$ is the state of the algorithm, this is defined for $S \subseteq E \setminus A$ as $\Delta_{S} \triangleq \min_{\sigma \in \mathbb{S}_{|S|}} \min_{j \in [|S|]} f(s_{\sigma(j)} \mid X - \{s_{\sigma(j)}\} + \{s_{\sigma(1)}, \ldots, s_{\sigma(j-1)}\})$, where $\mathbb{S}_{\ell}$ is the set of all permutations of size $\ell$.
    We truncate the algorithm by running only until $\theta \geq \btheta \triangleq \delta W/|E|$ such that it halts in polynomial time.
    A standard polynomial run time analysis is shown in Appendix~\ref{sec:runtime}.
    \begin{algorithm}[h!]
            \DontPrintSemicolon
            \SetKw{KwBy}{by}
            \caption{\MLS$(E, \alpha)$}
            \label{alg:merge-LS}
            \KwIn{Parameter $\alpha$ dictates the value of the improvement, $\delta$ accuracy parameter}
            %Let $L \triangleq - \lceil \log_{1- \e}(|E|\delta^{-1})\rceil $ \hfill \tcp{Number of intervals}
            Let $W = \max_{e \in E \colon v(e) \in \cI} f(e)$ \hfill \tcp{Maximum marginal element}
            Let $\theta \gets W$, $\theta' \gets \infty$ \hfill\tcp{Initialize marginal value.}
            Let $A_{\infty} \leftarrow \emptyset$, $X_\infty \leftarrow \emptyset, \prec_\infty \leftarrow \emptyset$ \hfill \tcp{Initialize $A$ and $X$}
            \While{$\theta > \delta \cdot \frac{W}{|E|}$}
            {
            $(A_\theta, X_\theta, \prec_\theta) \gets  (\theta, \alpha)$-$\LS(A_{\theta'}, X_{\theta'}, \prec_{\theta'})$\;
            $\theta' \gets \theta$\;
            $\theta \gets \max_{S \subseteq E \setminus A \colon |S| \leq k}\{\Delta_S \colon \Delta_S \lt \theta'\}$ 
            }
            \Return $(A_{\theta'}, X_{\theta'}, \prec_{\theta'})$\hfill\tcp{Final output at termination condition.}
    \end{algorithm}
    \begin{restatable}{myprop}{runtime}
        \label{lem:runtime-merge-LS}
        \Cref{alg:merge-LS} runs in polynomial time. Moreover, there is only a finite set of thresholds $\theta$ considered by the algorithm.
    \end{restatable}

    \section{Preliminaries for the Analysis of \Cref{alg:merge-LS}}
\label{sec: analysis of algo 2}
To analyze \Cref{alg:merge-LS}, we will let $O$ be an optimal matroid $k$-parity solution. 
Let $\btheta \triangleq \delta \cdot \frac{W}{|E|}$ be the termination threshold. Let $W = \theta_1 > \hdots > \theta_L > 0$ be the finitely many $\theta \in [0, W]$ such that $(\theta, \alpha)$-$\LS$ is called by \Cref{alg:merge-LS}. For every $i \in [1, L]$ and any $\tau \in [\theta_{i+1}, \theta_i]$, we let $(A_\tau, X_\tau) = (A_{\theta_i}, X_{\theta_i})$ where $\theta_{L+1} = 0$. 

\begin{claim}\label{claim:no tau swaps}
    There are no $(\tau, \alpha, X_\tau)$-improving swaps over $A_\tau$ for any $\tau \in [\btheta, W]$.
\end{claim}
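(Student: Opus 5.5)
We prove the claim by splitting according to where $\tau$ sits relative to the thresholds actually used by \Cref{alg:merge-LS}. By \Cref{lem:runtime-merge-LS} these thresholds form a finite list $W = \theta_1 > \dots > \theta_L$, so every $\tau \in [\btheta, W]$ lies in some interval $[\theta_{i+1}, \theta_i]$ with $(A_\tau, X_\tau) = (A_{\theta_i}, X_{\theta_i})$.

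\textbf{Step 1: $\tau$ is a threshold.} If $\tau = \theta_i$, then $(A_{\theta_i}, X_{\theta_i})$ is the output of $(\theta_i,\alpha)$-$\LS$. That subroutine halts only when its while-condition fails, so no $(\theta_i,\alpha,X_{\theta_i})$-improving swap exists over $A_{\theta_i}$.

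\textbf{Step 2: $\tau \in (\theta_{i+1}, \theta_i)$.} Suppose for contradiction that $(S,N)$ is a $(\tau,\alpha,X_{\theta_i})$-improvement for $A_{\theta_i}$. Take the ordering $s_1,\dots,s_\ell$ of $S$ from \Cref{def:improvement-submod}; along it every sequential marginal $f(s_j \mid X - s_j + \{s_1,\dots,s_{j-1}\})$ is at least $\tau$. By Step 1, $(S,N)$ is not a $(\theta_i,\alpha)$-improvement. The feasibility requirement and the $\alpha$-ratio requirement do not depend on the threshold, so the failure must come from the threshold condition. Hence, for this ordering, some sequential marginal is strictly less than $\theta_i$.

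The natural candidate for a contradiction is $\Delta_S$ with $S \subseteq E\setminus A_{\theta_i}$ and $|S| \le k$. The next threshold is $\theta_{i+1} = \max\{\Delta_{S'} : \Delta_{S'} < \theta_i\}$, and we want $\tau \le \Delta_S < \theta_i$, which would force $\theta_{i+1} \ge \tau$ and contradict $\tau > \theta_{i+1}$.

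This step is the main obstacle. As defined, $\Delta_S$ is the minimum over \emph{all} permutations, so it can fall below $\tau$ even though the specific ordering achieves $\ge \tau$ throughout. To fix this, I would read the threshold update as ranging over ordered tuples, that is, use the quantity $\min_j f(s_j \mid \cdots)$ for the given ordering in place of $\Delta_S$. With that reading, this quantity lies in $[\tau, \theta_i)$, and the contradiction $\theta_{i+1} \ge \tau$ goes through. The introduction describes the update as "the lowest marginal of any edge in an ordered $k$-tuple", which supports this reading. I will state this interpretation explicitly. If it cannot be adopted, the only alternative I see is monotonicity: a $(\tau,\alpha)$-improvement would also be a $(\theta',\alpha)$-improvement for a larger threshold $\theta' \ge \tau$ among the values the algorithm visits. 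Carrying this out requires checking that the candidate set in the $\max$ includes every ordered tuple's bottleneck value.

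\textbf{Step 3: $\tau \in [\btheta, \theta_L)$.} Here $(A_\tau, X_\tau) = (A_{\theta_L}, X_{\theta_L})$. The loop stopped because the next computed threshold fell to at most $\btheta$, or because no smaller $\Delta$ existed. Either way, the same argument as in Step 2 gives a bottleneck value in $[\tau, \theta_L)$ with $\tau \ge \btheta$, contradicting termination unless that value is at most $\btheta$. The boundary case $\tau = \btheta$ is handled by the strict inequality in the loop condition together with this same reading.

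Across the three steps, no $(\tau,\alpha,X_\tau)$-improving swap over $A_\tau$ exists for any $\tau \in [\btheta, W]$.
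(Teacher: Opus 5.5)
Your proof follows the paper's. At $\tau=\theta_i$ you use termination of the local search. For $\tau\in(\theta_{i+1},\theta_i)$ you use that the threshold update leaves no ordered tuple with bottleneck marginal in $[\tau,\theta_i)$. The paper states this step as ``there is no ordered tuple $(S,\sigma)$ with marginal in $[\tau,\theta_i)$ by definition of $\Delta$'', so it silently adopts the ordered-tuple reading that you make explicit.

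Your worry about the literal $\Delta_S$ is justified, and the reinterpretation is necessary, not optional. Take $a,b\notin X$ with $f(a\mid X)=20$, $f(b\mid X)=24$, $f(b\mid X+a)=6$, $f(a\mid X+b)=2$, and $\theta_i=10$. The order $(a,b)$ has bottleneck $6$. But $\Delta_{\{a\}},\Delta_{\{b\}}\ge 10$ and $\Delta_{\{a,b\}}=2$, so the literal update can jump below $6$ and leave a $(6,\alpha)$-improvement in place. Your monotonicity fallback cannot repair this. Ranging over ordered tuples, or using $\max_\sigma\min_j$ per set, both work.

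Your Step 3 covers the tail $[\btheta,\theta_L)$, which the paper's proof skips, and it is correct for $\tau>\btheta$. Its last sentence is wrong, however: the strict loop condition $\theta>\btheta$ works against you at $\tau=\btheta$. If the next computed threshold equals $\btheta$ exactly, the loop exits without running local search at $\btheta$. The tuple realizing that value can then be a $(\btheta,\alpha,X_{\theta_L})$-improvement over $A_{\theta_L}$, for example a singleton $s$ with $A_{\theta_L}+s$ feasible and $f(s\mid X_{\theta_L}-s)=\btheta$. So the claim can genuinely fail at that single point. This is harmless, since every later use is integrated over $\tau$ (or one can loop while $\theta\ge\btheta$), but it should be stated as an exception rather than asserted to be handled.

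The same caveat applies to the endpoints $\tau=\theta_{i+1}$. Under the closed convention $[\theta_{i+1},\theta_i]\mapsto A_{\theta_i}$ that you quote, the tuple defining $\theta_{i+1}$ may be an improvement over $A_{\theta_i}$. Your Step 1 implicitly uses the half-open intervals $(\theta_{i+1},\theta_i]$, which is the convention that makes the claim true.
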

\begin{proof}
    For $\tau = \theta_i$, it follows directly from the termination of $(\theta_i, \alpha)$-\LS.
    Let $i \in [L-1]$. Notice that for $\tau \in (\theta_{i+1}, \theta_i)$, there are no $(\tau, \alpha, X_{\theta_i})$-improving swaps over $A_{\theta_i}$ since there is no ordered tuple $(S, \sigma)$ such that $f(s_{\sigma(j)} \mid S - \{s_{\sigma(j)}\} \cup \{s_{\sigma(1)}, \ldots, s_{\sigma(j-1)}\}) \in [\tau, \theta_i) $ by definition of $\Delta$ in \Cref{alg:merge-LS}. In particular, there are no $(\tau, \alpha, X_\tau)$-improving swaps over $A_\tau$ for any $\tau \in [\btheta, W]$.
\end{proof}

Let the final output of \Cref{alg:merge-LS} be denoted by $A \triangleq A_0 = A_{\theta_L}$. We want to compare the value of $f(A)$ to $f(O)$. 
To do this, we define a set $O_\theta$ containing the edges of $O$ of value at least $\theta$.
The value of an element of $O$ is defined by an auxiliary weight $\ow$. 

\subsection{Partitioning the Edges of $O$ by Value}
    Similarly to \cite{Feldman-Ward:2026:Submodular}, we let $O \triangleq \{o_1, \ldots, o_{|O|}\}$ be an arbitrary ordering of $O$. The set of optimal edges that appeared prior to $o_\ell$ in the ordering is denoted by $O^{-}_{o_\ell}$. This ordering can be different from the one computed by \Cref{alg:merge-LS}.
    For every $\theta \in [0, W]$, we define an auxiliary weight at time $\theta$ for each $o \in O$. This auxiliary weight is only useful for the analysis and is not computed by the algorithm.
    For any $o \in O$, we define
    \begin{align*}
        \ow_\theta(o) \triangleq  f(o \mid X_\theta - \{o\} \cup O^{-}_o).
    \end{align*}
    This is the marginal contribution of $o$ with respect to the set of accepted edges up until marginal $\theta$ in the algorithm, as well as the optimal edges prior to $o$ in the ordering.
    Using submodularity, it is easy to see that $\ow_\theta(o)$ is increasing in $\theta$ for any fixed $o \in O$. This is since $X_{\theta} \subseteq X_{\gamma}$ for any $\theta \geq \gamma$.
    \begin{claim}
        \label{claim:non-increasing-auxiliary-weight}
        For any $o \in O$, its auxiliary weight $\ow_\theta(o)$ is increasing as a function of $\theta \in [0, W]$.
    \end{claim}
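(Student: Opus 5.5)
The plan is to show that the sets $X_\theta$ are nested, shrinking as $\theta$ grows, and then to apply submodularity. Since $\ow_\theta(o)$ is a marginal contribution $f(o \mid \cdot)$, it can only go up when the conditioning set shrinks. So it suffices to show that the conditioning set $X_\theta - \{o\} \cup O^{-}_o$ is non-increasing, with respect to inclusion, as $\theta$ increases. The set $O^{-}_o$ does not depend on $\theta$, so everything reduces to showing $X_\gamma \subseteq X_\theta$ whenever $\theta \ge \gamma$.

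\textbf{Step 1 (monotonicity of $X$ along the run).} I would first check that within a single call of \Cref{alg:local-search}, the only update to $X$ is $X \leftarrow X \cup S$. Thus $X$ never loses elements during a call, and re-inserted edges are only moved inside the order $\prec$. Next, \Cref{alg:merge-LS} feeds $X_{\theta'}$ from the previous threshold into the call at the next threshold. Together these give $X_{\theta_1} \subseteq X_{\theta_2} \subseteq \cdots \subseteq X_{\theta_L}$ for the thresholds $\theta_1 > \theta_2 > \cdots > \theta_L$. By \Cref{lem:runtime-merge-LS} there are only finitely many thresholds, so this indexing is legitimate.

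\textbf{Step 2 (extension to all $\theta\in[0,W]$).} By the convention above, $X_\tau = X_{\theta_i}$ for $\tau \in [\theta_{i+1}, \theta_i]$, so $\tau \mapsto X_\tau$ is a step function. Given $\theta \ge \gamma$, let $\theta$ lie in the interval of index $i$ and $\gamma$ in the interval of index $j$, so that $j \ge i$. By Step 1, $X_\theta = X_{\theta_i} \subseteq X_{\theta_j} = X_\gamma$. There is a minor ambiguity at the endpoints $\theta_{i+1}$, which belong to two intervals. I would resolve it with the convention that assigns the larger index; either choice preserves the chain, since the two candidate sets are consecutive members of a nested sequence.

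\textbf{Step 3 (submodularity).} Set $T_\theta \triangleq (X_\theta \setminus \{o\}) \cup O^{-}_o$. Step 2 gives $T_\theta \subseteq T_\gamma$. If $o \notin T_\gamma$, submodularity applies directly and yields $f(o \mid T_\theta) \ge f(o \mid T_\gamma)$, that is, $\ow_\theta(o) \ge \ow_\gamma(o)$. The element $o$ cannot lie in $T_\gamma$ anyway: it is removed from $X_\gamma$ explicitly, and $o \notin O^{-}_o$ by the definition of the ordering of $O$.

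There is no real obstacle here. The only point requiring care is Step 1, namely confirming that removals from the solution $A$ never remove elements from $X$, so that $X$ only grows even though edges leave and re-enter $A$. Here "increasing" means non-decreasing, as the claim's label indicates.
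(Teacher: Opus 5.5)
Your proof is correct and takes the same approach as the paper. The paper justifies the claim in one line: $X_\theta \subseteq X_\gamma$ whenever $\theta \geq \gamma$, so submodularity gives the monotonicity. Your Steps 1--3 spell this out: you show that $X$ only grows during the run, that the step-function convention (including either resolution of the endpoint ambiguity) preserves the nesting, and that $o$ lies outside the conditioning set, so submodularity applies as the paper defines it.
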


    We partition $O$ according to these auxiliary weights. Starting at $\theta = W$ and then decreasing $\theta$ to $0$, we define recursively:
    \begin{align}
        O_\theta & \triangleq  \lc o \in O \mathbin{\big\backslash}
 \lb\bigcup_{\theta' \gt \theta} O_{\theta'}\rb \colon \ow_\beta(o) \geq \beta \mbox{ for all } \beta \in [0, \theta] \rc. \label{eq:definition-Oi}
    \end{align}
    In particular, the auxiliary weight of any $ o \in O_\theta$ satisfies $\ow_{\theta'}(o) \geq \theta'$ for all $\theta' \leq \theta$. 
    For simplicity, we define $O_{\geq \theta} \triangleq \bigcup_{\theta' \geq \theta} O_{\theta'}$.
    \begin{claim}\label{claim:marg-O_i}
        The sets $(O_\theta)_{\theta \in [0, W]}$ partition of $O$. For any $\theta \in [0, W]$ and any $o \in O_\theta$, we have that $\ow_\theta(o) = \theta$.
    \end{claim}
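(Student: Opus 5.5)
Fix $o \in O$ and study the function $g_o(\beta) \triangleq \ow_\beta(o) - \beta$ on $[0,W]$. The natural candidate for the class of $o$ is
\[
\theta^*(o) \triangleq \sup\{\theta \in [0,W] \colon \ow_\beta(o) \geq \beta \text{ for all } \beta \in [0,\theta]\}.
\]
I will show three things in turn: this supremum is attained, $o \in O_{\theta^*(o)}$, and $o \notin O_\theta$ for every other $\theta$. These together give the partition. The equality $\ow_{\theta^*}(o) = \theta^*$ then comes from a separate argument.

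First, the set over which the supremum is taken is nonempty. At $\beta = 0$ monotonicity gives $\ow_0(o) = f(o \mid \cdot) \geq 0$, so $\theta = 0$ is admissible. The admissible set is a down-closed interval, so its supremum is some $\theta^* \in [0,W]$.

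Next, $\theta^*$ itself is admissible. By the convention preceding \Cref{claim:no tau swaps}, $X_\beta$ is piecewise constant in $\beta$. It equals $X_{\theta_i}$ on each interval $[\theta_{i+1},\theta_i]$, and there are finitely many thresholds by \Cref{lem:runtime-merge-LS}. Hence $\ow_\beta(o)$ is a step function that is constant on these intervals, and it is nondecreasing by \Cref{claim:non-increasing-auxiliary-weight}. Consequently $g_o$ is continuous from the left on each piece, and the inequality $\ow_\beta(o) \geq \beta$ holds for all $\beta < \theta^*$. It therefore also holds at $\beta = \theta^*$. One caveat: on shared endpoints the convention assigns two values to $X_\beta$. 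I will fix the convention that $X_{\theta_i}$ is used on the half-open interval $(\theta_{i+1},\theta_i]$. I expect handling this endpoint issue carefully to be the main technical nuisance.

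With admissibility in hand, the recursion \eqref{eq:definition-Oi} places $o$ in $O_{\theta^*}$, provided $o$ was not already placed in some $O_{\theta'}$ with $\theta' > \theta^*$. Any such $\theta'$ would be admissible, contradicting maximality of $\theta^*$. For the same reason $o \notin O_\theta$ for $\theta > \theta^*$. For $\theta < \theta^*$, membership is blocked by the exclusion of $O_{\theta^*}$ in the recursion. So the classes are disjoint and every $o$ lies in exactly one of them, namely $O_{\theta^*(o)}$.

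The equality $\ow_{\theta^*}(o) = \theta^*$ splits into two cases.

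\textbf{Case $\theta^* = W$.} We have $\ow_W(o) \leq f(o) \leq W$, using submodularity and the fact that $o$ is feasible. Combined with admissibility, $\ow_W(o) \geq W$, so equality holds.

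\textbf{Case $\theta^* < W$.} Admissibility gives $\ow_{\theta^*}(o) \geq \theta^*$. Suppose the inequality were strict, say $\ow_{\theta^*}(o) = \theta^* + \eta$ with $\eta > 0$. On the piece $(\theta_{i+1},\theta_i]$ containing $\theta^*$, the value $\ow_\beta(o)$ is constant. So if $\theta^* < \theta_i$, every $\beta \in (\theta^*, \min(\theta_i, \theta^*+\eta))$ still satisfies $\ow_\beta(o) \geq \beta$, contradicting maximality.

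The delicate subcase is $\theta^* = \theta_i$, since just above $\theta_i$ the set $X$ is the smaller $X_{\theta_{i-1}}$. Monotonicity in $\theta$ then gives $\ow_\beta(o) \geq \ow_{\theta_i}(o) = \theta^* + \eta > \beta$ for $\beta$ slightly above $\theta^*$. This again contradicts maximality. It is exactly this monotonicity, which \Cref{claim:non-increasing-auxiliary-weight} provides, that makes the step go through.

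Hence $\ow_{\theta^*}(o) = \theta^*$, which completes the proof.
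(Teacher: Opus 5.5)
Your proof is correct and is essentially the paper's argument: if $\ow_{\theta^*}(o) > \theta^*$, monotonicity of $\ow_\theta(o)$ in $\theta$ makes $o$ admissible at a strictly larger level. Beyond the paper, you also make explicit what it leaves implicit, namely that the supremum is attained (which is what yields the partition) and the boundary case $\theta^* = W$ via $\ow_W(o) \leq f(\{o\}) \leq W$.

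The step-function structure and the endpoint convention are not actually needed. Monotonicity alone gives $\ow_{\theta^*}(o) \geq \ow_\beta(o) \geq \beta$ for all $\beta < \theta^*$, which settles attainment. Your monotonicity argument from the subcase $\theta^* = \theta_i$ also works verbatim for every $\theta^* < W$, including $\theta^* = 0$, which lies in none of your half-open pieces.
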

    \begin{proof}
            [Proof of \Cref{claim:marg-O_i}]
            The partitioning follows by the recursive definition of these sets.
            Suppose for contradiction that $\ow_{\theta}(o) > \theta$ so in particular $\ow_\theta(o) = \theta + \e$ for some $\e >0$. We have that $\bar{w}_\theta(o) \leq \bar{w}_{\theta + \e'}(o)$ for all $\e' \in [0, \e]$ by \Cref{claim:non-increasing-auxiliary-weight}. Then since $\ow_\beta(o) \geq \beta$ for all $\beta \in [0, \theta]$, we then have that $o \in O_{\geq \theta + \e}$, contradicting the definition of $O_\theta$.
        \end{proof}

    \subsection{Properties of Incremental Values}
     We first show using submodularity that the $\prec$-incremental value satisfies certain nice behaviors, which will help analyze \Cref{alg:merge-LS}.
    \begin{mylemma}
        \label{lem:properties-incremental}
        Given an ordering $\prec$ of $E$, the following properties hold:
        \begin{compactenum}
            \item For any $S \subseteq E$, we have $f(S) - f(\emptyset) = \sum_{e \in S} w_{\prec}(e, S)$. \label{item:incremental-telescoping}
            \item For any $T \subseteq S \subseteq E$ and $e \in E$, we have $w_\prec(e, S) \leq w_\prec(e, T)$. \label{item:incremental-submodularity}
        \end{compactenum}
    \end{mylemma}
    \begin{proof}
        [Proof of \Cref{lem:properties-incremental}]
        Let $S = \{e_1, e_2, \ldots, e_{|S|}\}$ be the ordering of $S$ according to $\prec$. Then, we have that:
        \begin{align*}
            \sum_{e \in S} w_\prec(e, S) & = \sum_{j = 1}^{|S|} w_\prec(e_j, S)
            = \sum_{j = 1}^{|S|} f(e_j \mid \{e_1, e_2, \ldots, e_{j-1}\}) 
            = f(S) - f(\emptyset).
        \end{align*}
        We now prove \cref{item:incremental-submodularity}. Let $T \subseteq S$, and let $e \in E$. Let $S_e^{-}$ and $T_e^{-}$ be the set of elements appearing before $e$ in $S$ and, $T$, respectively. Using the submodular property of $f$ with the fact that $T_{e}^- \subseteq S_{e}^- $, we obtain
        \begin{align*}
            w_\prec(e, S) & = f(e \mid S_e^{-})
            \leq f(e \mid T_e^{-})
            = w_\prec(e, T). \qedhere
        \end{align*}
    \end{proof}
    The following statement is then a simple consequence of \Cref{lem:properties-incremental}
        \begin{mylemma}
            \label{claim:increasing-incremental-values}
            Fix an interval $[\beta, \lambda] \subseteq [0, W]$ and suppose that $a \in A_\tau$ contiguously for all $\tau \in [\beta, \lambda]$, i.e. $a$ is not dropped by any $(\theta, \alpha)$-swap for any $\tau \in [\beta, \lambda]$. Then, $a \in A_\tau$ for all $\tau \in [\beta, \lambda]$ and $w_{\prec_\beta}(a, A_\beta) \geq w_{\prec_\lambda}(a, A_\lambda) \geq \lambda \geq \beta$. 
        \end{mylemma}
    \begin{proof}[Proof of \Cref{claim:increasing-incremental-values}]
        The fact that $a \in A_\tau$ for all $\tau \in [\beta, \lambda]$ is by definition of $a$'s contiguous appearance.
        The second part of the lemma follows from the submodularity of the incremental value.  
        Let $(S, N)$ be the last $(\theta, \alpha)$-improvement that adds $a$ to $A_\theta$ for some $\theta \geq \lambda$.
        Let $S = \{s_1, \ldots, s_q\}$ be ordered according to the $(S,N)$ swap, and let $a = s_p$ for some $p \in [q]$. Let $X_a \subseteq X_\theta$ be the set of all elements taken by the algorithm just prior to making the $(S, N)$ swap.
        By definition of $(\theta, \alpha)$-improvement, we have that $f(a \mid X_a - \{s_p\} + \{s_1, \ldots, s_{p-1}\}) \geq \theta \geq \lambda$.
        Now, let $A'$ be the solution just prior to the addition of $a$ via the $(S, N)$-swap and $\prec$ be the ordering on $X_a$ just prior to inserting $a$ via the $(S, N)$-swap.
        Then $A' \subseteq X_a - \{s_p\} \cup \{s_1, \ldots, s_{p-1}\}$. Thus, by submodularity, we have that under $\prec$, $w_\prec(a, A') = f(a \mid A') \geq f(a \mid X_a - \{s_p\} + \{s_1, \ldots, s_{p-1}\} ) \geq \lambda$. 
        Moreover, because $a$ does not participate in other swaps within the time frame $[\beta, \theta]$, the relative ordering of $a$ diminishes as edges are (re)inserted after $a$ during swaps.
        Thus, $\{a' \in A': a' \prec a\} \supseteq \{a' \in A_\theta: a' \prec_\theta a\} \supseteq \{a' \in A_\lambda: a' \prec_\lambda a\} \supseteq \{a' \in A_\beta: a' \prec_\beta a\}$. The lemma then follows by submodularity (item~\ref{item:incremental-submodularity} of \Cref{lem:properties-incremental}).
        \end{proof}

    \section{Building Matroid Exchanges}
\label{sec: matroid swaps}
    We analyze the quality of the algorithm's solution by constructing non-improving local swaps between the intermediary solution $A_\tau$ for any $\tau \in [\btheta, W]$ and the optimal solution $O$. The information carried by the local swaps is represented by a conflict graph between $O$ and $A_\tau$. This graph has the following properties.
    \begin{restatable}{mylemma}{ConflictGraph}
        \label{thm:decomposition-ith-round-2}
        For every $\tau \in [\btheta, W]$, there exists a bipartite conflict graph $G = (O \times v(A_\tau), E)$ such that the following hold:
        \begin{compactenum}
            \item For each $v \in v(A_\tau)$, we have $\deg_G(v) = 1$. \label{item:decomp-2-1}
            \item For any $o \in O$, we have $v(A_\tau) \setminus N_o \cup v(o) \in \cI$, where $N_o \subseteq v(A_\tau)$ are the neighbors of $o$ in $G$.\label{item:decomp-swap-2}
            \item For every $o \in O_{\geq \tau}$, we have that $\deg_G(o) \geq 1$. \label{item:decomp-deg-1}
            \item For each $a \in A_\tau$, let $T_a \triangleq \{o \in O_{\geq \tau} \colon |N_o| = 1 \mbox{ and } N_o \subseteq v(a) \}$, then $\sum_{o \in T_a}\ow_\tau(o) \leq f(T_a \mid X_\tau - T_a) \leq \alpha w_{\prec_\tau}(a, A_\tau)$ and $|T_a| \leq k$.\label{item:decomp-single-2}
        \end{compactenum}
    \end{restatable}
    \Cref{thm:decomposition-ith-round-2} partitions the optimal solution $O$ and $v(A_\tau)$ into a collection of valid matroid swaps (\cref{item:decomp-swap-2}).
    It also gives further properties for the swaps involving edges of $O_{\geq \tau}$ using the absence of $(\tau, \alpha)$-improving swaps with respect to $A_\tau$. In particular, it bounds the auxiliary weight of edges $o \in O_{\geq \tau}$ incident to a fixed edge of $A_\tau$ (\cref{item:decomp-single-2}).
    These swaps will be defined by repeatedly applying matroid basis exchanges (\Cref{thm:Rota-setminus}, \Cref{thm:Rota-non-basis-2} in the appendix). The proof of this theorem follows that of \cite{Singer-Thiery:2025:Better} but adapts it to find larger exchanges up to size $k$ instead of size $2$. The proof is in \Cref{sec:appendix-matroid}.
    
\section{Approximation Guarantee of \Cref{alg:merge-LS}}\label{sec: approx analysis submod}
    We are now ready to compute the approximation guarantee of our algorithm restated here.
    \begin{mythmnonumber}
        There is a $\frac{k}{2} + o(k)$-approximation for maximizing a monotone submodular function subject to a matroid $k$-parity constraint.
    \end{mythmnonumber}
    The proof follows a standard strategy by bounding the marginal contribution of edges of $O$ with respect to their conflicting neighbors in $A$, and aggregating the bounds together. 
     Observe that $X_0 = X_{\btheta}$ are the edges that were ever inserted by a swap in \Cref{alg:merge-LS}. The analysis will consider separately the edges of $X_0\setminus A$ that appear throughout the algorithm but get removed in favor of improving swaps, and the remaining edges $A$.

\subsection{Decomposition of the value of $O$ into cardinality bounds}
    In this section, we express the value of $O$ by decomposing it according to the number of edges with the same auxiliary weight for each weight $\tau \in [0, W]$.
    To do this, for any $o \in O$, let $\tau \in [0, W]$ be the unique value such that $o \in O_\tau$ (\Cref{claim:marg-O_i}). In particular, we have $\ow_\tau(o) = \tau$ for any $o \in O$ by \Cref{claim:marg-O_i}. Then, $\ow_\tau(o)  = \int_{0}^\tau 1 dy$.
    \begin{mylemma}
        \label{lem:decomp-opt-value}
        The following equation holds:
        \begin{align}
            \label{eq:opt-value-decomp}
            f(O \mid X_0) & \leq \delta f(A) + \int_{\btheta}^W |O_{\geq \tau}| d\tau\enspace.
        \end{align}
    \end{mylemma}

    \begin{proof}[Proof of \Cref{lem:decomp-opt-value}]
        By definition of the auxiliary weights $\ow$, we have that 
        \begin{align*}
            f(O \mid X_0) & \leq \displaystyle\sum_{o \in O}\ow_0(o) \leq \int_{0}^W \displaystyle\sum_{o \in O_\tau} \ow_\tau(o) d\tau
        \end{align*}
        where we have used that $\ow_\tau(o)$ is increasing in $\tau$ and that $(O_\tau)_{\tau \in [0, W]}$ partition $O$. Then using that $\ow_\tau(o) = \tau$ for $o \in O_\tau$ (\Cref{claim:marg-O_i}), we get that 
        \begin{align*}
            f(O \mid X_0) &\leq  \int_{0}^W \displaystyle |O_\tau| \tau d\tau
            = \int_{0}^W \int_0^\tau |O_\tau| dy d\tau
            = \int_0^W \int_y^W |O_\tau| d\tau dy 
            = \int_0^W |O_{\geq y}| dy, 
        \end{align*}
        Finally, we split this last integral depending on the value of $\tau$ to get
        \begin{align*}
            f(O \mid X_0) & \leq \int_0^{\btheta} |O_{\geq y}| dy + \int_{\btheta}^W |O_{\geq y}| dy \leq |O|\btheta + \int_{\btheta}^W |O_{\geq y}| dy \leq \delta W + \int_{\btheta}^W |O_{\geq y}| dy,
        \end{align*}
        The last inequality uses that $\btheta = \delta \frac{W}{|E|}$. The proof concludes using that $W \leq f(A)$.
    \end{proof}

    We then use the bipartite conflict-graph from \Cref{thm:decomposition-ith-round-2} to bound the size of $|O_{\geq \tau}|$ with respect to $|A_\tau|$ for any $\tau \in [\btheta, W]$.
    \begin{mylemma}
        \label{lem:size-O_i}
        For any $\tau \in [\btheta, W]$, we have
        \begin{align*}
            \card{O_{\geq \tau}} & \leq \frac{k}{2} \cdot |A_\tau| + \sum_{a \in A_\tau} \min \lc k; \frac{\alpha w_{\prec_\tau}(a, A_\tau)}{\tau} \rc\enspace.
        \end{align*}
    \end{mylemma}
    This statement follows a simple counting argument. Each $o \in O_{\geq \tau}$ has degree at least $1$ in the bipartite graph $G$ given in \Cref{thm:decomposition-ith-round-2} (i.e. at least one matroid-conflict). We obtain the result by splitting $O_{\geq \tau}$ into vertices with degree $1$ or more than $2$, using that every $a \in A_\tau$ has degree at most $k$, and that there are not too many vertices of degree $1$.
    \begin{proof}
        [Proof of \Cref{lem:size-O_i}]
        The proof follows from the constructed bipartite conflict graph on $O \times v(A_\tau)$ of \Cref{thm:decomposition-ith-round-2}. Partition $O_{\geq \tau} = O_s \sqcup O_d$, where $O_s = \{o \in O_{\geq \tau} \colon |N(o)| = 1\}$, and $O_d = \{o \in O_{\geq \tau} \colon |N(o)| \geq 2\}$, since every $o \in O_{\geq \tau}$ satisfies $\deg_G(o) \geq 1$ (\cref{item:decomp-deg-1}~\Cref{thm:decomposition-ith-round-2}).
        These sets denote the edges of $O$ with a \emph{single} conflict and the edges with at least 2 conflicts with respect to $v(A_\tau)$.
        We know that every vertex $v \in v(A_\tau)$ is contained in at most 1 neighborhood $\{N_o\}_{o \in O}$. In particular, we have $2 |O_d| \leq \sum_{o \in O_d} |N(o)| \leq |v(A_\tau)| \leq k |A_\tau|$. \\

        Moreover, the sets $\{T_a\}_{a \in A_\tau}$ from item \ref{item:decomp-single-2} of \Cref{thm:decomposition-ith-round-2} partition $O_s$ and satisfy $|T_a| \leq k$ and $\displaystyle\sum_{o \in T_a} \ow_\tau(o) \leq \alpha w_{\prec_\tau}(a, A_\tau)$. Lastly, since $o \in O_{\geq \tau}$, we have that $\ow_\tau (o) \geq \tau$ by definition. Then,
        \begin{align*}
            |O_s| = \displaystyle\sum_{a \in A_\tau} |T_a| \leq \displaystyle\sum_{a \in A_\tau} \min \lc k ; \frac{\alpha w_{\prec_\tau}(a, A_\tau)}{\tau} \rc.
        \end{align*}
        We conclude the lemma since $|O_{\geq \tau}| = |O_s| + |O_d|  \leq \frac{k}{2} \cdot |A_\tau|+ \sum_{a \in A_\tau} \min \lc k; \frac{\alpha w_{\prec_\tau}(a, A_\tau)}{\tau} \rc. \qedhere$    \end{proof}

\subsection{Contribution of Appearances}
    We would now like to integrate the inequality of \Cref{lem:size-O_i} over $\tau \in [\btheta, W]$ to obtain a guarantee in terms of the value of the final solution. We will split the analysis into the edges of $A_\tau$ which appear in the final solution $A$ and all occurrences of edges that get removed throughout the process. 
    For this purpose, we need to define \emph{appearances} for every edge in $X_0$.
    Recall that $X_0$ is the set of all edges that ever entered the solution throughout the execution of \Cref{alg:merge-LS}. 
    \begin{mydef}
        \label{def:appearance}
        For every edge $a \in X_0$, we assign a set of \emph{appearances}.
        An appearance of $a$ is a maximal interval $(\theta_r, \theta_i]$ in $[\btheta, W]$ such that \begin{enumerate}
            \item $a$ is inserted during $(\theta_i, \alpha)$-$\LS$ 
            \item $a$ is \textbf{not} removed during $(\theta, \alpha)$-$\LS$ for any $\theta \in (\theta_r, \theta_i]$ if $\theta_r \neq \theta_i$
            \item either $(i)$ $a$ is removed during $(\theta_r, \alpha)$-$\LS$ (an \emph{evicted appearance}) or \\
           \-\hspace{1cm} $(ii)$ $a$ remains in the solution until $\theta_r = 0$ (a unique \emph{final appearance}).
        \end{enumerate}
        Note that we can have $\theta_i = \theta_r$ if $a$ is inserted and removed during $(\theta_i, \alpha)$-$\LS$.
    \end{mydef}
    Informally, an appearance is a continuous interval during which an edge stays within the solution. An edge may have multiple appearances.
    \begin{mydef}
        \label{def:appearance-notation}
        For an edge $a \in X_0$, we denote by $\Kcal_a$ the set of its evicted appearances. We denote by $\Fcal_a$ the final appearance, which is empty for $a \notin A$ and nonempty for $a \in A$.
    \end{mydef}

Note that every evicted appearance of some $a \in X_0$ must be part of some $(\theta, \alpha)$-swap $(S, N)$ where $a \in N$ removes $a$ from the solution. We will aggregate over all such swaps to bound the contribution of the evictions. Towards this end, let $T$ be the total number of $(\theta, \alpha)$-swaps performed by \Cref{alg:merge-LS} (over all $\theta$). Let $A^{(1)} = \emptyset$ be the initial solution and $A^{(t)}$ be the solution just prior to performing the $\nth{t}$ exchange for $t \in [T]$, where $N_t$ is discarded and $S_t$ is added. Let $\prec_{t}$ be the order over edges $A^{(t)}$ in the solution prior to the $\nth{t}$ swap. Let $A^{(T+1)} = A$ be the final output of \Cref{alg:merge-LS} after the $\nth{T}$ swap. 

\begin{mydef}\label{def: tth swap}
    For any $a \in X_0$ and any evicted appearance $I = (\theta_r, \theta_i] \in \Kcal_a$, there exists a swap $(S_t, N_t)$ made during $(\theta_r, \alpha)$-$\LS$ where $a \in N_t$ removes $a$. We denote by $t_I$ the index $t \in [T]$ associated with this swap $(S_t, N_t)$.  
\end{mydef}

\begin{claim}\label{claim: evict marginal bound}
    For any $a \in X_0$ and any evicted appearance $I = (\theta_r, \theta_i] \in \Kcal_a$, we have $\gamma \leq w_{\prec_{\gamma}}(a, A_{\gamma}) \leq w_{\prec_{t_I}}(a, A^{(t_I)})$ for any $\gamma \in I$.
\end{claim}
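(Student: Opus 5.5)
The plan is to reduce the claim to \Cref{claim:increasing-incremental-values}, together with the same monotonicity-of-prefix argument used in its proof, now extended up to the moment of eviction.

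Fix an evicted appearance $I = (\theta_r, \theta_i]$ of $a$ and some $\gamma \in I$. By \Cref{def:appearance}, $a$ is inserted during $(\theta_i,\alpha)$-$\LS$ and is not removed for any $\theta \in (\theta_r,\theta_i]$. So $a$ sits in $A_\tau$ contiguously for $\tau \in [\gamma,\theta_i]$. One technicality needs care here. The paper's convention sets $A_\tau = A_{\theta_j}$ for $\tau \in [\theta_{j+1},\theta_j]$, and so, read literally, $A_\gamma$ for $\gamma$ strictly between thresholds is the solution after finishing $(\theta_j,\alpha)$-$\LS$ at the threshold $\theta_j \ge \gamma$ with $\theta_j \in (\theta_r,\theta_i]$. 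I would check that this reading makes $a \in A_\gamma$. This is where the half-open interval matters: $a$ is removed at $\theta_r$, so $\gamma=\theta_r$ is excluded.

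For the lower bound $\gamma \le w_{\prec_\gamma}(a,A_\gamma)$, I would apply \Cref{claim:increasing-incremental-values} with $\beta=\gamma$ and $\lambda=\gamma$. Its proof gives $w_{\prec_\gamma}(a,A_\gamma) \ge \theta_i' \ge \gamma$, where $\theta_i'$ is the threshold at which $a$ was last added. The reason is that the insertion swap guarantees $f(a \mid X_a - a + \{s_1,\dots,s_{p-1}\}) \ge \theta_i \ge \gamma$. Moreover, at time $\gamma$ the set of elements of $A_\gamma$ preceding $a$ in $\prec_\gamma$ is contained in $X_a - a + \{s_1,\dots,s_{p-1}\}$. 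This holds because $a$ is never reinserted during the appearance, so everything inserted afterward is placed after $a$. Submodularity then finishes this step.

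For the upper bound $w_{\prec_\gamma}(a,A_\gamma) \le w_{\prec_{t_I}}(a,A^{(t_I)})$, I would note that $A_\gamma$ is the solution at some time before swap $t_I$, with $a$ present throughout. I would track the predecessor set $P_t = \{a' \in A^{(t)} \colon a' \prec_t a\}$ across the swaps $t$ from that time up to $t_I$. Each swap either removes elements, which shrinks $P_t$, or inserts elements of $S_t$ at the end of the order, after $a$, which leaves $P_t$ unchanged. Reinserted elements are also placed at the end, and $a$ itself is not moved since it is not reinserted. Hence $P_t$ is non-increasing in $t$, so $P_{t_I} \subseteq \{a'\in A_\gamma \colon a' \prec_\gamma a\}$. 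Item~\ref{item:incremental-submodularity} of \Cref{lem:properties-incremental}, or submodularity directly, then gives the inequality.

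I expect the main obstacle to be the bookkeeping of time indices: matching $A_\gamma$ and $\prec_\gamma$ to a specific swap index $t \le t_I$, and confirming that $a \in A_\gamma$ for every $\gamma$ in the half-open interval, including the edge case $\theta_r=\theta_i$, where $I$ is empty and the claim is vacuous. The submodularity steps themselves are routine once the predecessor sets are shown to be nested.
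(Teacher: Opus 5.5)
Your proposal is correct and follows essentially the same route as the paper. The lower bound comes from \Cref{claim:increasing-incremental-values}, i.e.\ the insertion guarantee at threshold $\theta_i \ge \gamma$ combined with submodularity. The upper bound comes from the nesting $\{a' \in A^{(t_I)} : a' \prec_{t_I} a\} \subseteq \{a' \in A_\gamma : a' \prec_\gamma a\}$, which holds because $a$ remains in the solution until swap $t_I$ and so is never moved in $\prec$. Your additional bookkeeping (the half-open interval, the convention for $A_\gamma$ between thresholds, and invoking submodularity directly since the orderings differ) only makes explicit what the paper's one-line proof leaves implicit.
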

\begin{proof}[Proof of \Cref{claim: evict marginal bound}]
    This claim is a direct application of \Cref{claim:increasing-incremental-values}, where we use that $a$ is inserted during a $(\theta_i, \alpha)$-swap, and does not participate in any swaps until the removal $(S_{t_I}, N_{t_I})$ during $(\theta_r, \alpha)$-$\LS$ so that $\{a' \in A_{\gamma}: a' \prec_{\gamma} a\} \supseteq \{a' \in A^{(t_I)}: a' \prec_{t_I} a\}$ for any $\gamma \in (\theta_r, \theta_i]$.
\end{proof}

\begin{notation}
    From this point on, and unless specified otherwise, we remove the subscript $w_{\prec_\theta}(a, A_\theta)$ and instead write $w(a, A_\theta)$ as we only evaluate the incremental value of $a$ with respect to the ordering defined by $A_{\theta}$. The same applies for $w_{\prec_t}(a, A^{(t)})$ which we write as $w(a, A^{(t)})$.
\end{notation}

\begin{claim}\label{claim: N S swap bound}
    For any swap $(S_t, N_t)$ for $t \in [T]$, we have that $f(A^{(t+1)}) - f(A^{(t)}) \geq (\alpha -1) \cdot \displaystyle\sum_{a \in N_t}w(a, A^{(t)})$.
\end{claim}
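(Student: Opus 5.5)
The plan is to split the change in value into a gain from adding $S_t$ and a loss from removing $N_t$. Write $A' \triangleq A^{(t)} \setminus N_t$, so that $A^{(t+1)} = A' \cup S_t$. Let $(X,\prec)$ be the ordered set held by the algorithm just before the $\nth{t}$ swap; it contains $A^{(t)}$. Then
\begin{align*}
f(A^{(t+1)}) - f(A^{(t)}) = f(S_t \mid A') - f(N_t \mid A').
\end{align*}
It suffices to show that $f(S_t \mid A') > \alpha \sum_{a \in N_t} w(a, A^{(t)})$ and that $f(N_t \mid A') \leq \sum_{a \in N_t} w(a, A^{(t)})$.

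For the gain term, use that $S_t \subseteq E \setminus A^{(t)}$. Hence $A' \subseteq A^{(t)} \subseteq X - S_t$, even though edges of $S_t$ may already belong to $X$ from earlier appearances. Submodularity then gives $f(S_t \mid A') \geq f(S_t \mid X - S_t)$. By the definition of a $(\theta,\alpha,X)$-improvement (\Cref{def:improvement-submod}), the right-hand side exceeds $\alpha \sum_{a \in N_t} w_{\prec}(a, A^{(t)})$. Here $\prec$ is exactly the ordering $\prec_t$, so this is the first inequality.

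For the loss term, list $N_t = \{a_1, \ldots, a_m\}$ in increasing $\prec_t$-order and telescope:
\begin{align*}
f(N_t \mid A') = \sum_{j=1}^m f\bigl(a_j \mid A' \cup \{a_1,\ldots,a_{j-1}\}\bigr).
\end{align*}
Every $a' \in A^{(t)}$ with $a' \prec_t a_j$ lies either in $A'$ or in $\{a_1,\ldots,a_{j-1}\}$. So $\{a' \in A^{(t)} \colon a' \prec_t a_j\} \subseteq A' \cup \{a_1,\ldots,a_{j-1}\}$, and $a_j$ belongs to neither set. Submodularity bounds each summand by $f(a_j \mid \{a' \in A^{(t)} \colon a' \prec_t a_j\}) = w(a_j, A^{(t)})$. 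This is essentially the telescoping in \Cref{lem:properties-incremental}, applied only to the removed part.

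Subtracting the two bounds gives the claim with $(\alpha-1)$, in fact with strict inequality. I do not expect a real obstacle. The only points needing care are checking that $A' \subseteq X - S_t$, which relies on $S_t$ being disjoint from the current solution and not from $X$, and checking that the incremental values in the swap condition are taken with respect to the same order $\prec_t$ and set $A^{(t)}$ used in the claim.
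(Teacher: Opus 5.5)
Your proposal is correct and follows the paper's proof essentially step for step. You use the same split of the change into $f(S_t \mid A^{(t)}\setminus N_t) - f(N_t \mid A^{(t)}\setminus N_t)$, the same submodularity comparison against $X - S_t$ together with the swap condition for the gain term, and the same $\prec_t$-ordered telescoping bound for the loss term; your explicit check that $A^{(t)}\setminus N_t \subseteq X - S_t$ (using that $S_t$ is disjoint from $A^{(t)}$, not from $X$) is a detail the paper leaves implicit.
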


\begin{proof}[Proof of \Cref{claim: N S swap bound}]
    Consider a $(\theta, \alpha)$-improvement $(S_t, N_t)$ for the current solution $A^{(t)}$. Let $X^{(t)}$ the set of all edges that have appeared up until reaching $A^{(t)}$. Then, we have that
        \begin{align*}
            f(A^{(t+1)}) - f(A^{(t)}) & = f((A^{(t)} \setminus N_t) \cup S_t) - f(A^{(t)}) \\
            & = f(S_t \mid A^{(t)} \setminus (N_t \cup S_t)) - f(N_t \mid A^{(t)} \setminus N_t) & (\mbox{since } S_t \not\subseteq A^{(t)})\\
            & \geq f(S_t \mid X^{(t)} - S_t) - f(N_t \mid A^{(t)} \setminus N_t) &(\mbox{submodularity: } A^{(t)} \subseteq X^{(t)}) \\
            & \geq f(S_t \mid X^{(t)} - S_t) - \sum_{a \in N_t} w(a, A^{(t)})\\
            & \geq (\alpha - 1) \cdot \sum_{a \in N_t} w(a, A^{(t)}). & (\mbox{definition of } (\tau, \alpha)\mbox{-swap})
        \end{align*}
        For the second inequality, let $N_t = \{a_1, \ldots, a_\ell\}$ be an enumeration of $N_t$ according to $\prec$, then by submodularity we have:
        \begin{align*}
            f(N_t \mid A^{(t)} \setminus N_t) & = \sum_{i = 1}^\ell f(a_i \mid A^{(t)} \setminus N_t \cup \{a_1, \ldots, a_{i-1}\}) \leq \sum_{i = 1}^\ell w(a_i, A^{(t)})\enspace. \qedhere
        \end{align*}
\end{proof}

    The following lemma bounds the effect of \emph{evicted} appearances compared to the value of the final solution. In particular, we prove that the value of evicted edges is small compared to the weight of the final solution.
    \begin{mylemma}
        \label{lem:bound-X}
        We have that
        \begin{align*}
            \sum_{a \in X_0} \displaystyle\sum_{I \in \Kcal_a} w(a, A^{(t_I)}) & \leq \frac{1}{\alpha - 1} f(A) \enspace.
        \end{align*}
    \end{mylemma}
    \begin{proof}
        [Proof of \Cref{lem:bound-X}]
        Using \Cref{claim: N S swap bound} and the fact that $A^{(T+1)} = A$ the final algorithm output, we have that
        \begin{align*}
            \sum_{a \in X_0} \displaystyle\sum_{ I \in \Kcal_a} w(a, A^{(t_I)})
            & \leq \sum_{t \in [T]}\displaystyle\sum_{a \in N_t} w(a, A^{(t)})
            \leq \sum_{t \in [T]} \frac{1}{\alpha -1}\left[ f(A^{(t+1)}) - f(A^{(t)})\right]
            \leq \frac{f(A)}{\alpha -1}. \qedhere
        \end{align*}
    \end{proof}

    It remains to show that we can integrate the right hand side of \Cref{lem:size-O_i} and express the value as a function of evicted appearances.
    \begin{mylemma}
        \label{lem:doubles-occurences}
        The following equation holds:
        \begin{align*}
            \int_{\btheta}^W  |A_\tau|d\tau & \leq  \frac{\alpha}{\alpha - 1} f(A).
        \end{align*}
    \end{mylemma}
    \begin{proof}
        [Proof of \Cref{lem:doubles-occurences}]
        The core of the proof lies in showing that:
        \begin{align*}
            \int_{\btheta}^W  |A_\tau|d\tau & \leq \left(\sum_{a \in X_0}\displaystyle\sum_{I \in \Kcal_a} w(a, A^{(t_I)}) \right)+ f(A),
        \end{align*}
        to then apply \Cref{lem:bound-X}.
        Observe that for any $\tau \in [\btheta, W]$ and each $a \in A_\tau$, we can associate a unique appearance $ (\theta_r, \theta_i] \in \Kcal_a \cup \Fcal_a$ such that $\tau \in (\theta_r, \theta_i]$ is covered by the appearance. Therefore,
        \begin{align*}
            \int_{\btheta}^W |A_\tau| d\tau  \leq \displaystyle\sum_{a \in X_0} \displaystyle\sum_{(\theta_r, \theta_i] \in \Kcal_a \cup \Fcal_a} \int_{t_r}^{t_i} 1\, d\tau
            & = \displaystyle\sum_{a \in X_0} \displaystyle\sum_{(\theta_r, \theta_i] \in \Kcal_a \cup \Fcal_a} (\theta_i - \theta_r) \\
            &\leq \displaystyle\sum_{a \in X_0} \displaystyle\sum_{I \in \Fcal_a } w(a, A_{\theta_i}) + \displaystyle\sum_{a \in X_0} \displaystyle\sum_{I \in \Kcal_a} w(a, A^{(t_I)}) \\
            & \leq \displaystyle\sum_{a \in X_0} \displaystyle\sum_{I \in \Fcal_a } w(a, A_{0}) + \displaystyle\sum_{a \in X_0} \displaystyle\sum_{I \in \Kcal_a} w(a, A^{(t_I)})
        \end{align*}
        where the last two inequalities follow from $\theta_i \leq w(a, A_{\theta_i}) \leq w(a, A_{0})$ for any final appearance by \Cref{claim:increasing-incremental-values}. We also use that $\theta_i \leq w(a, A_{\theta_i}) \leq w(a, A^{(t_I)}$ for any evicted appearance (\Cref{claim: evict marginal bound}).
        Since $A_0 = A$ and there is only a final appearance if $a \in A$, we get
        \begin{align*}
             \int_{\btheta}^W |A_\tau| d\tau & \leq \displaystyle\sum_{a \in X_0} \displaystyle\sum_{I \in \Kcal_a} w(a, A^{(t_I)}) + \displaystyle\sum_{a \in A} w(a, A_0)
             \leq \frac{1}{\alpha -1} f(A) + f(A)
        \end{align*}
        where the last inequality follows from applying \Cref{lem:bound-X} to the first summand and the telescoping property of the incremental value to the second summand. This yields the desired result.
    \end{proof}

    \subsection{Contribution of Singles}
    Recall that the second term on the right-hand side of \Cref{lem:size-O_i} comes from summing over the single conflict edges $O_s$. We proceed similarly to bound the total value by integrating this term over all $\tau$ and we show that this term contributes an $o(k)$ value of $f(A)$.
    \begin{mylemma}
        \label{lem:single-occurences}
        The following equation holds:
        \begin{align*}
            \int_{\btheta}^W  \sum_{a \in A_\tau}\min \lc k, \frac{\alpha w_{\prec_\tau}(a, A_\tau)}{\tau}\rc d\tau \leq \lb \alpha + \alpha \ln(k/\alpha) \rb \frac{\alpha}{\alpha - 1} f(A).
        \end{align*}
    \end{mylemma}
    \begin{proof}
        [Proof of \Cref{lem:single-occurences}]
        As in the proof of \Cref{lem:doubles-occurences}, we note that each $a \in A_\tau$ is associated to a unique appearance $(\theta_r, \theta_i] \in \Kcal_a \cup \Fcal_a$ such that $\tau \in (\theta_r ,\theta_i]$. Therefore,
        \begin{align*}
            \int_{\btheta}^W  \sum_{a \in A_\tau}\min \lc k, \frac{\alpha w(a, A_\tau)}{\tau}\rc d\tau
            & \leq \sum_{a \in X_0} \sum_{(\theta_r, \theta_i] \in \Kcal_a \cup \Fcal_a}\int_{\theta_r}^{\theta_i} \min \lc k, \frac{\alpha w(a, A_\tau)}{\tau}\rc d\tau.
        \end{align*}
        Consider a fixed interval $(\theta_r, \theta_i] \in \Kcal_a \cup \Fcal_a$. Notice that for any $\tau \in (\theta_r, \theta_i]$, the term $\frac{\alpha w(a, A_\tau)}{\tau}$ is decreasing in $\tau$ by \Cref{claim:increasing-incremental-values} and since $\frac{1}{\tau}$ decreases in $\tau$. 
        Let $\gamma \in (\theta_r, \theta_i]$ be such that $k \leq \frac{\alpha w(a, A_q)}{q}$ for all $q \in (\theta_r, \gamma]$, and  $k \geq \frac{\alpha w(a, A_q)}{q}$ for all $q \in (\gamma, \theta_i]$. 
        Then, for a fixed interval $(\theta_r, \theta_i]$, we have
        \begin{align*}
            \int_{\theta_r}^{\theta_i} \min \lc k, \frac{\alpha w(a, A_\tau)}{\tau}\rc d\tau & = \int_{\theta_r}^\gamma k d\tau + \int_{\gamma}^{\theta_i} \frac{\alpha w(a, A_\tau)}{\tau} d\tau. 
            \end{align*}
        We split the computation into two parts and estimate each integral separately.
        The first integral is vacuous if $\gamma = \theta_r$, otherwise, we have $k \leq \frac{\alpha w(a, A_{\gamma})}{\gamma}$. Thus, we have that
        \begin{align*}
            \int_{\theta_r}^{\gamma} k d\tau & = (\gamma - \theta_r) k \leq \gamma k \leq \alpha w(a, A_\gamma).
            %\leq \alpha w(a, A_{t_r}),
        \end{align*}
          The second integral is vacuous if $\gamma = \theta_i$, otherwise, we have that $\gamma \geq \frac{\alpha w(a, A_{\gamma})}{k}$. Therefore, we get that
          \begin{align*}
              \int_{\gamma}^{\theta_i} \frac{\alpha w(a, A_\tau)}{\tau} d\tau & \leq \int_{\gamma}^{w(a, A_\gamma)} \frac{\alpha w(a, A_\gamma)}{\tau} d\tau \\
            & = \alpha w(a, A_\gamma) \ld \ln(w(a, A_\gamma)) - \ln(\gamma) \rd \\
            & \leq \alpha w(a, A_\gamma) \cdot \ld \ln(w(a, A_\gamma)) - \ln(\alpha w(a, A_\gamma)/k)) \rd\\
            & = \alpha w(a, A_{\gamma}) \cdot \ln(k/\alpha). 
          \end{align*}
        The first inequality uses \Cref{claim:increasing-incremental-values} such that $w_{\prec_\tau}(a, A_\tau) \leq w_{\prec_\gamma}(a, A_{\gamma})$ for $\tau \in (\gamma, \theta_i]$ and $\theta_i \leq w_{\prec_\gamma}(a, A_\gamma)$.
        Combining these two equations, we get that 
        \begin{align}
            \int_{\theta_r}^{\theta_i} \min \lc k, \frac{\alpha w(a, A_\tau)}{\tau}\rc d\tau
            & \leq \alpha w(a, A_\gamma) \lb 1 + \log(k/\alpha)\rb.\label{eq:interval integral}
        \end{align}
        Notice that if $I = (\theta_r, \theta_i]$ is in $ \Kcal_a$, then $w_{\prec_\gamma}(a, A_\gamma) \leq w_{\prec_{t_I}}(a, A^{(t_I)})$ by \Cref{claim: evict marginal bound}. Summing \Cref{eq:interval integral} over all $a \in X_0$ and $I \in \Kcal_a$ and using \Cref{lem:bound-X}, we then get that 
        $$\sum_{a \in X_0} \sum_{(\theta_r, \theta_i] \in \Kcal_a}\int_{\theta_r}^{\theta_i} \min \lc k, \frac{\alpha w(a, A_\tau)}{\tau}\rc d\tau \leq  \lb 1 + \ln(k/\alpha) \rb \frac{\alpha f(A)}{\alpha -1}.$$
        Moreover, summing \Cref{eq:interval integral} over all $a \in A$ and $I \in \Fcal_a$ and using that $w_{\prec_\gamma}(a, A_\gamma) \leq w_{\prec_0}(a, A_0)$ for $(\theta_r, \theta_i] \in \Fcal_a$ and $\gamma \in (\theta_r, \theta_i]$, we get that 
        $$\sum_{a \in X_0} \sum_{I \in \Fcal_a}\int_{\theta_r}^{\theta_i} \min \lc k, \frac{\alpha w(a, A_\tau)}{\tau}\rc d\tau \leq  \lb 1 + \ln(k/\alpha) \rb \alpha f(A).$$
        We conclude the proof by putting the two bounds together to obtain that
        \begin{align*}
            \int_{\btheta}^W \sum_{a \in A_\tau} \min \lc k, \frac{\alpha w(a, A_\tau)}{\tau}\rc d\tau 
            & \leq \lb \alpha + \alpha \ln(k/\alpha) \rb \frac{\alpha}{\alpha - 1} f(A). \qedhere
        \end{align*}
    \end{proof}

\subsection{Putting Guarantees Together to Prove \Cref{thm:main-approximation-guarantee}}
    As a corollary of the previous computations, we directly obtain a bound on the marginal contribution of $O$ with respect to $X_0$ by combining \Cref{lem:decomp-opt-value} together with \Cref{lem:size-O_i}, \Cref{lem:doubles-occurences}, and \Cref{lem:single-occurences}.
    \begin{mycor}
        \label{cor:marg-OPT-wrt-X0}
        The following equation holds:
        \begin{align*}
            f(O \mid X_0) & \leq \frac{\alpha}{\alpha - 1} \lb \frac{k}{2} +  \alpha + \alpha \ln(k/\alpha) + \delta \rb f(A).
        \end{align*}
    \end{mycor}
    \begin{proof}[Proof of \Cref{cor:marg-OPT-wrt-X0}]
        By \Cref{lem:decomp-opt-value}, \Cref{lem:size-O_i}, \Cref{lem:doubles-occurences}, and \Cref{lem:single-occurences}, we have
        \begin{align*}
            f(O \mid X_0) - \delta f(A) & \leq \int_{\btheta}^W |O_{\geq \tau}| d\tau & (\mbox{\Cref{lem:decomp-opt-value}})\\
            & \leq \int_{\btheta}^W \frac{k}{2} \cdot |A_\tau| + \sum_{a \in A_\tau} \min \lc k; \frac{\alpha w(a, A_\tau)}{\tau} \rc d\tau & (\mbox{\Cref{lem:size-O_i}})\\
            &\leq \frac{k}{2} \cdot \frac{\alpha}{\alpha -1}f(A) + \lb \alpha + \alpha \ln(k/\alpha)\rb \frac{\alpha}{\alpha -1}f(A) & (\mbox{\Cref{lem:doubles-occurences},\ref{lem:single-occurences}})\\
            & = \frac{\alpha}{\alpha - 1} \lb \frac{k}{2} +  \alpha + \alpha \ln(k/\alpha)\rb f(A). \qedhere
        \end{align*}
    \end{proof}

    Observe that \Cref{cor:marg-OPT-wrt-X0} only bounds $f(O \mid X_0)$ and not $f(O \mid A)$. The next lemma shows that $f(X_0)$ is close to $f(A)$.
    \begin{mylemma}
        \label{lem:value-inserted-wrt-final}
        Let $A$ be the final solution, and let $X_0$ be the set of all edges entering the solution throughout the execution of \Cref{alg:merge-LS}. Then, the following equation holds:
        \begin{align*}
            f(X_0) & \leq \frac{\alpha}{\alpha - 1} f(A) \enspace.
        \end{align*}
    \end{mylemma}
    \begin{proof}
        [Proof of \Cref{lem:value-inserted-wrt-final}]
Define another ordering $\prec_\ast$ over $X_0$ where an edge is inserted in $\prec_{\ast}$ when it is first inserted by \Cref{alg:merge-LS} and never replaced after that. Contrary to $\prec_0$, the ordering $\prec_\ast$ orders edges according to their \textit{first} insertion by the algorithm only. 
Any edge $a \in X_0 \setminus A$ was evicted by some appearance $I \in \Kcal_a$. For each such $a$, let $I_a$ be the first such evicted appearance of $a$. Then, 
\begin{align*}
    f(X_0) &= \displaystyle\sum_{a \in X_0} f(a \mid \{a' \in X_0: a' \prec_\ast a\}) \\
    & = \displaystyle\sum_{a \in A} f(a \mid \{a' \in X_0: a' \prec_\ast a\}) + \displaystyle\sum_{a \in X_0\setminus A} f(a \mid \{a' \in X_0: a' \prec_\ast a\}) \\
    & \leq \displaystyle\sum_{a \in A} f(a \mid \{a' \in A: a' \prec_\ast a\}) + \displaystyle\sum_{a \in X_0\setminus A} f(a \mid \{a' \in A^{(t_{I_a})}: a' \prec_\ast a\}) \\
    & \leq f(A) + \displaystyle\sum_{a \in X_0\setminus A} w_{\prec_{t_{I_a}}}(a, A^{(t_{I_a})}) \\
    & \leq \left( 1 + \frac{1}{\alpha - 1} \right) f(A)
\end{align*}
where the penultimate inequality follows by submodularity and from the fact that $\{a' \in A^{(t_{I^a})}: a' \prec_{t_{I^a}} a\} \subseteq \{a' \in A^{(t_{I^a})} : a' \prec_\ast a\}$ for any $a \in X_0 \setminus A$. The last inequality follows from \Cref{lem:bound-X}.\qedhere
    \end{proof}

    We can now state the guarantees of \Cref{alg:merge-LS} precisely.
    \begin{mythm}
        \label{thm:main-approx}
        Let $A$ be the output of \Cref{alg:merge-LS}, and let $O$ be the optimal solution. Then, the approximation guarantee of \Cref{alg:merge-LS} is equal to
        \begin{align*}
            f(O) & \leq \frac{\alpha}{\alpha - 1} \lb \frac{k}{2} + 1 + \alpha + \alpha \log(k/\alpha) + \delta \rb f(A).
        \end{align*}
    \end{mythm}
    \begin{proof}[Proof of \Cref{thm:main-approx}]
        We combine \Cref{cor:marg-OPT-wrt-X0}, and \Cref{lem:value-inserted-wrt-final} and monotonicity of $f$ to obtain
        \begin{align}
            f(O) \leq f(O \mid X_0) + f(X_0) & \leq \frac{\alpha}{\alpha - 1} \lb \frac{k}{2} +  \alpha + \alpha \log(k/\alpha) + \delta\rb f(A) + \frac{\alpha}{\alpha-1} f(A). \qedhere
        \end{align}
    \end{proof}

    Setting $\alpha$ appropriately, we obtain the claimed asymptotic guarantee.
       \begin{restatable}{mycor}{alphacalc}
        \label{cor:approx-num}
        Set $\alpha = \sqrt{k/\log(k)}$ in \Cref{alg:merge-LS}. Then,
        \begin{align*}
            f(O) & \leq \lb \frac{k}{2} + O(\sqrt{k \log(k)}) \rb f(A).
        \end{align*}
    \end{restatable}
    
   The calculations that conclude \Cref{cor:approx-num} are given in Appendix \ref{sec: appendix-alpha-calc}.

    \section{An improved approximation for set packing via hybrid $\alpha$-local search and unweighted local search}\label{sec: set pack}

The weighted $k$-set packing problem is a subproblem of weighted matroid $k$-parity, in which the underlying matroid is free, i.e., $\cM = 2^V$. In particular, the goal is simply to find a maximum weight collection of disjoint edges, each of which contains up to $k$ vertices. We denote by $w:E \rightarrow \mathbb{R}_{\geq 0}$ the weight function.
For any two sets of edges $A, O \subseteq E$, we define the \emph{neighborhood} of $O$ in $A$ as $N(O, A) \triangleq \{ a \in A \colon a \cap O \neq \emptyset\}$ as the set of edges in $A$ with common vertices with $O$. The set of vertices covered by both hyperedges of $A$ and $O$ in is denoted by $N_v(O,A) \triangleq \{v(O) \cap v(A)\}$.
    We also define a weight function over the vertices of the set packing instance.
    \begin{mydef}[Vertex Weight]
        Let $A \subseteq E$ be a collection of disjoint hyperedges, we define $c:v(A) \rightarrow \RR_{\geq 0}$ where $c_v = w_e$ for $v \in e$, and $c(S) = \sum_{v \in S} c_v$ for all $S \subseteq v(A)$. Thus, $c_v$ is the cost of the hyperedge $e\in A$ covering $v$. This is well defined since $A$ is a collection of disjoint hyperedges.
    \end{mydef}

\subsection{Main Algorithm}

We obtain a  $\frac{\ln(4) k}{3} + o(k)$-approximation algorithm for weighted $k$-Set Packing by combining the $\alpha$-local search algorithm and the algorithm of Singer and Thiery \cite{Singer-Thiery:2025:Better}. 
The algorithm of \cite{Singer-Thiery:2025:Better} creates a random partitioning of the instance into weight classes defined by a random parameter $\tau$. The algorithm processes the weight classes greedily and runs the state-of-the-art unweighted algorithm on each weight class. Further details can be found in \cite{Singer-Thiery:2025:Better, Feldman-Ward:2026:Submodular}.
Our algorithm repeats this strategy, but after processing the $\nth{i}$ weight class, it continues to search for improving $\alpha$-swaps over all the edges up to the $\nth{i}$ weight class. In particular, in contrast to the method outlined in \cite{Singer-Thiery:2025:Better, Feldman-Ward:2026:Submodular}, we do not contract the instance at each iteration but rather continue to search for $\alpha$ improvements over the increasingly larger instance. 
It is worth noting that to apply the state-of-the-art unweighted algorithm \cite{Cygan:2013:Improved}, we treat each weight class as completely unweighted (\Cref{alg:interval-local-search}).

    \paragraph{Interval Local Search:}
        Given a feasible solution $A$, \Cref{alg:interval-local-search} applies Cygan's algorithm \cite{Cygan:2013:Improved} on an interval $I$. It does so by defining a subinstance over all edges with $w_e \in I$ which are unblocked by the current solution $A$. The algorithm of \cite{Cygan:2013:Improved} runs with a precision parameter $\e$ which also affects the run time. We will choose $\e$ to be an arbitrarily small constant.
    \begin{algorithm}[h!]
        \DontPrintSemicolon
        \caption{\textsc{Interval} \LS$(A, I, \e)$}
        \label{alg:interval-local-search}
        \KwIn{Feasible solution $A \subseteq E$, and interval $I \subseteq [0, W]$.}
        Let $E' \leftarrow \{e \in E \colon w_e \in I \mbox{ and } N(e, A) = \emptyset \}$\;
        Let $A' \leftarrow \textsc{Unweighted Local-Search}(E', \e)$\;
        \Return $A \cup A'$\;
    \end{algorithm}

    \paragraph{Hybrid Local Search:} \Cref{alg:merge-LS-setpacking} is our main algorithm which uses \Cref{alg:interval-local-search} and \Cref{alg:local-search} as subroutines.
    For clarity, we restate the definition of $(\theta, \alpha)$-improving swaps for the weighted setting where these swaps are modified for to run over elements of weight strictly greater than $\theta$.
    
        \begin{mydef}[$(\theta, \alpha)$-improvement]
        \label{def:improvement-weight}
            Let $A \subseteq E$ be a feasible solution.
            Given a constant $\theta > 0$ and $\alpha \geq 1$, we say that a pair of sets $(S, N)$ with $S \subseteq E \setminus A$ and $N \subseteq A$ is a \emph{$(\theta, \alpha)$-improvement for $A$} if $(A \cup S) \setminus N$ is feasible and if
            \begin{align}
            |S| = \ell \leq k, |N| \leq \ell k \text{, each }s \in S\text{ satisfies }w_s > \theta\text{, and }w(S) \geq \alpha \cdot
            w(N).
            \label{item:alpha-swap-weight}
            \end{align}
        \end{mydef}

    \begin{algorithm}[h!]
            \DontPrintSemicolon
            \SetKw{KwBy}{by}
            \caption{\HMLS$(\alpha, \e)$}
            \label{alg:merge-LS-setpacking}
            \KwIn{Parameter %$\delta$ that quantifies the number of intervals,%
            $\e$ that quantifies the precision of \textsc{Interval $\LS$}, and $\alpha$ that quantifies the swap value size.}
            % Let $1 - \tau \sim \mathcal{U}(1/2, 1]$ \hfill \tcp{Random marker placement}
            Let $\beta \sim \mathcal{U}(0,1]$ and let $\tau \leftarrow 2^\beta$\hfill\tcp{Random marker placement}
            % Let $L \triangleq \lceil -\log_{1/2}(\card{E}\delta^{-1}) \rceil + 1$ \hfill \tcp{Number of intervals}
            Let $W \gets \max_{e \in E} w_e$\;
            Let $m_i \gets W\tau2^{-i}$ for any $i \in \mathbb{N}$\;
            Let $\mathcal{J} \gets \{j \in \mathbb{N} : \exists e \in E, w_e \in (m_{j+1}, m_j]\}$ \hfill \tcp{Collection of relevant intervals}
            Let $A \leftarrow \emptyset$ \hfill \tcp{Initialize $A$}
            \For{$i \in \mathcal{J}$}
            {%\tcp{At the beginning of this loop: $A \leftarrow \textsc{Interval } \LS(A, (m_{i}, m_0])$}
            {$A \gets \textsc{Interval } \LS(A, (m_{i+1}, m_{i}], \e)$\;
            }
            $A \gets  (m_{i+1}, \alpha)$-$\LS(A)$ \hfill \tcp{\Cref{alg:local-search}}
            }
            \Return $A$
    \end{algorithm}
        To see that \Cref{alg:merge-LS-setpacking} runs in polynomial time in $|E|, \e$, we note that we only process the intervals that contain at least one element of the ground set, so that there are at most $|E|$ many intervals and therefore iterations of $\textsc{Interval } \LS$ and $(m_{i+1}, \alpha)$-$\LS$. In Appendix~\ref{sec:runtime} we argue that $(m_{i+1}, \alpha)$-$\LS$ runs in polynomial time in $|E|$ for constant $k$. $\textsc{Interval } \LS$ runs the algorithm of \cite{Cygan:2013:Improved} on precision $\e$ which runs in polynomial time in $|E|$ for any choice of constant $\e$.

\subsection{Analysis of \Cref{alg:merge-LS-setpacking}}
 Our main theorem states that \Cref{alg:merge-LS-setpacking} gives an improved approximation for $k$-Set Packing. 
 \begin{restatable}{mythm}{setpackapprox}
        \label{thm:APX-better}
        There exists a choice of parameters $\alpha \geq 1, \e \in (0, 1)$ for \Cref{alg:merge-LS-setpacking} such that its final output $A$ satisfies:
        \begin{align*}
            w(O) & \leq \left(\frac{\ln(4)k}{3}+ O(\sqrt{k})\right) \esp{w(A)} \leq \left(\frac{k}{2.16404}+ O(\sqrt{k})\right) \esp{w(A)}.
        \end{align*}
 \end{restatable}
  This improves over Neuwohner's $\frac{k}{2.00561} + O(1)$-approximation algorithm without relying on the use of the $w^2$-algorithm of Berman \cite{Berman:2000:Approximation}. \\
 
 To prove the above theorem, we define a partition of the elements of $O \cap (m_{i+1}, m_i]$ right after the termination of $A_i \leftarrow \textsc{Interval} \LS((m_{i+1}, m_i])$. It indicates whether the neighborhood of an optimal edge contains 1, 2, or more than 3 vertices in $A_i$ (i.e. single, double or triple conflicts). The proof splits the analysis into neighbors present in final output and those removed.
 The idea is that, on each interval, Cygan's analysis of $\textsc{Interval } \LS$ \cite{Cygan:2013:Improved} proves that the majority of $O$, except an $o(k)$-fraction, have $3$ vertex conflict. 
 They behave roughly as a $k/3$ approximation. 
 The neighbors that get removed account for a $\alpha$ factor weight increase of the current solution. 
 Putting all of these influences together yields the improved approximation guarantee. 
    % \begin{restatable}[Projection]{mydef}{Projection}
    %     Let $A$ be the current solution of \Cref{alg:merge-LS-setpacking} and let $O$ be an optimal solution. 
    %     For each $o \in O$, we define the \emph{projection} as $\proj(o, A) \triangleq v(N(o,A))$ as the vertices covered by the neighborhood of $o$ in $A$.
    %     %. Let $\{o, N_o\}_{o \in O}$ be the set of exchanges between $O$ and $A$, where since we are a set packing instance, $N_o$ is exactly the set of hyperedges in $A$ that are adjacent to the hyperedge $o$.  the mapping $\proj(o, A) = v(N_o)$, 
    % \end{restatable}
    % \begin{remark}
    %     Since $O$ is a collection of disjoint edges, for any solution $A$, we have $N(o, A) \cap N(o', A) = \emptyset$ for $o \neq o' \in O$.
    % \end{remark}

    \subsubsection{Partitioning of the optimal solution}
    For each $i \in \mathcal{J}$, let $I_i = (m_{i+1}, m_i]$ denote the $\nth{i}$ weight bucket. Throughout the rest of proof, we denote by $A_i$ the algorithm's solution right after $\textsc{Interval } \LS(A, (m_{i+1}, m_{i}])$. Similarly, we let $A'_i$ be the algorithm's solution right after $(m_{i+1}, \alpha)$-$\LS(A)$. 
    Let $L$ be the last index in $\mathcal{J}$. The final output of \Cref{alg:merge-LS-setpacking} is equivalently denoted by $A_{L+1}$ and $A'_L$.
    Finally, let $O_i \triangleq O \cap I_i$ be the optimal edges in the $\nth{i}$ weight bucket. Next, we define the edges of $\Orem \subseteq O$ whose neighborhood gets removed.
    We say an edge $e$ is \textit{evicted in round $j$} if we have that $e \in N$ for some $(m_j, \alpha)$-swap $(S,N)$. We denote by $R$ the set of all evicted elements. Then, we define
    %We can now define the subset of $O$ whose neighborhood gets evicted at some step in \Cref{alg:merge-LS-setpacking}.
    \begin{itemize}
        \item  Let $\Orem_i \triangleq \{ o \in O_i : \exists a \in N(o, A_i)\cup N(o, A'_i) \text{ that is evicted at some round }j \geq i\}.$ 
        \item Let $\Orem \triangleq \displaystyle\bigcup_{i \in \mathcal{J}} \Orem_i$.
    \end{itemize}
We are then left with the edges of $O$ whose neighborhood is contained in all iterations up to the final solution $A_{L+1}$. Notice that for $o \in O_i\setminus \Orem_i$, $N(o, A_i) \subseteq N(o, A'_i)$ as we assume the neighborhood is not evicted in round $i$ itself. For these edges, we will split the analysis into \emph{singles, doubles, and triples}. For each $i \in \mathcal{J}$, these sets are defined as:
    \begin{itemize}
        \item $S_i\triangleq \{ o \in O_i\setminus \Orem_i \colon \card{N_v(o, A_i)} = 1, c(N_v(o, A_i)) \leq 3 w_o\}$ be the set of singles in $O_i$ w.r.t $A_i$ whose neighbor is of similar weight.

        \item $D_i \triangleq \{ o \in O_i\setminus \Orem_i \colon \card{N_v(o, A_i)} = 2 , c(N_v(o, A_i)) \leq 3 w_o\}$ be the set of doubles in $O_i$ w.r.t $A_i$ whose neighbors are of similar weight.

        \item $T_i \triangleq \{ o \in O_i\setminus \Orem_i \colon \card{N_v(o, A_i)} \geq 3 \} \cup \{ o \in O_i \setminus \Orem_i \colon \card{N_v(o, A_i)} \leq 2, c(N_v(o, A_i)) > 3 w_o\} $ be the set of triples in $O_i$ w.r.t $A_i$ as well as items with very large weighted neighbors.
        \end{itemize}
    The same set names without the subscript $i$ correspond to the union over all intervals $i \in \mathcal{J}$, i.e. $T =\bigcup_{i \in \mathcal{J}} T_i$.
    In the next part of the proof, we show that $\Orem$ is only $o(k)$-larger than our final solution $A_{L+1}$ for a large choice of $\alpha$.

    \subsubsection{Negligible contribution of the edges with removed neighborhoods}
        Let $R$ be the set of all evicted edges in $\bigcup_{i}A_i \cup \bigcup_i A'_i$, i.e. $R = \{a \in \bigcup_{i}A_i \cup \bigcup_i A'_i: a \notin A_{L+1}\}$.. Observe that edges can only exit the solution when calling $(m_{i}, \alpha)$-$\LS$ for some $i$. Indeed, in any round $i \in \mathcal{J}$, we execute \textsc{Interval} \LS$(A, (m_{i+1}, m_i])$ and the interval $(m_{i+1}, m_i]$ is a new, unprocessed interval such that \textsc{Interval} \LS$(A, (m_{i+1}, m_i])$ only adds elements into the current solution.
        Thus $A'_{i-1} \subseteq A_i$ where we denote by $A'_{i-1}$ as the solution just prior to processing index $i \in \Jcal$. Similarly to the analysis for matroid $k$-parity, we bound the contribution of edges that are evicted throughout the algorithm in favor of improving swaps. 
        We use the following lemma over the weight of the evicted edges. 

    \begin{mylemma}
        \label{lem:discard-elements}
        Let $R$ be the set of evicted edges, and $A_{L+1}$ be the final solution. Then, we have
        \begin{align*}
            w(R) & \leq \frac{1}{\alpha - 1} \cdot w(A_{L+1}).
        \end{align*}
    \end{mylemma}
    \begin{proof}
        [Proof of \Cref{lem:discard-elements}]
        The proof can be directly obtained by \Cref{lem:bound-X}. We restate the proof removing notations related to submodularity and appearances (\Cref{def:appearance}).
        We look at the increase in the solution value after an $(m, \alpha)$-improvement. Let $A$ be the current solution. Consider an $(m, \alpha)$ improvement $(S, N)$ for the current solution $A$. Let $A' \triangleq (A \setminus N) \cup S$ be the solution obtained by applying the improvement. Then, we have that
        \begin{align*}
            w(A') - w(A) & = w((A \setminus N) \cup S) - w(A)
            =  w(S) - w(N)
            \geq (\alpha - 1) w(N).
        \end{align*}
        We conclude the proof by summing over all exchanges performed throughout the execution of the algorithm. Let $T$ be the number of $(m_i, \alpha)$-swaps for $i \in \mathcal{J}$. Let $A^{(t)}$ be the current solution during the $\nth{t}$ exchange, where $N_t$ is discarded.
        Now, observe that each $e \in R$ appears in at least of the set $N_t$ for some $ t \in [T]$.
        Then, we have that
        \begin{align*}
            (\alpha - 1) \sum_{e \in R} w_e & \leq (\alpha - 1) \sum_{t \in [T]} w(N_t) \leq \sum_{t \in [T]} \ld w(A^{(t)}) - w(A^{(t-1)}) \rd =  w(A_{L+1}) \enspace. \qedhere
        \end{align*}
    \end{proof}

    The next lemma shows that, assuming $\alpha = \Omega({1})$, the weight of $\Orem$ is negligible (an $o(k)$-factor) compared to the weight of the final solution.
    \begin{mylemma}
        \label{lem:removed-setpack}
        The following equation holds:
        \begin{align*}
            w(\Orem) & \leq \frac{2k}{(\alpha - 1)} \cdot w(A_{L+1})\enspace.
        \end{align*}
    \end{mylemma}
    \begin{proof}
        [Proof of \Cref{lem:removed-setpack}]
        Let $o \in \Orem_i$. Then, there exists some $a \in N(o, A_i) \cup N(o, A_i')$ evicted at some round $j \geq i$. In particular, we have that $a \in R$, and $w_a \geq 1/2 w_o$ since $w_o \in (m_{i+1}, m_i]$ and $w_a \geq m_{i+1}$.
        For each $i \in \mathcal{J}$ and $o \in \Orem_i$, let $a \triangleq \pi(o)$ be one arbitrary such evicted neighbor.
        For each $a \in R$, at most $k$ edges $o \in \Orem$ have $a = \pi(o)$ as its neighbor since $a$ has $k$ vertices and $o \in \Orem$ can only conflict with distinct vertices each. Therefore, 
        \begin{align*}
            kw(R) & = k \displaystyle\sum_{a \in R} w_a \geq \displaystyle\sum_{o \in \Orem} w(\pi(o)) \geq \frac{ w(\Orem)}{2}.
        \end{align*}
        The proof terminates by applying \Cref{lem:discard-elements}.
        \end{proof}

        It is worth noting that \Cref{lem:removed-setpack} distinctly works for a $k$-set packing instance as opposed to a general matroid $k$-parity. This is because any $o \in \Orem$ can only conflict with distinct vertices over all time steps, whereas matroid exchanges per time step do not guarantee this property.

\subsubsection{Negligible contribution of singles}
    The next lemma shows that $A_{L+1}$ is a good approximation for the subset of the singles $S$ that are still singles after $\alpha$-$\LS$ is performed.
    To show this, we further partition the sets $S_i$ for each $i \in \mathcal{J}$ as follows:
    \begin{enumerate}
        \item $\Snear_i \triangleq \{ o \in S_i: N(o, A_i) \subseteq (m_{i+1}, m_i]\}$. 
        \item $\Sfarst_i \triangleq \{ o \in S_i: N(o, A_i) \not\subseteq (m_{i+1}, m_i], N(o, A_i) = N(o, A'_i)\}$.
        \item $\Sfarunst_i \triangleq \{ o \in S_i: N(o, A_i) \not\subseteq (m_{i+1}, m_i], N(o, A_i) \neq N(o, A'_i)\}$.
    \end{enumerate}

    We first show that $A_{L+1}$ is a $o(k)$ approximation relative to $\Snear \cup \Sfarst$. We will then defer the analysis over $\Sfarunst$ to Section \ref{sec: trip, d far, s unstable}.
    
    \begin{mylemma}
        \label{lem: bound for A_{L+1}}
        The final solution $A_{L+1}$ satisfies 
        $$w(\Snear\cup\Sfarst) \leq 8 \alpha \cdot w(A_{L+1}).$$
    \end{mylemma}

    \begin{proof}[Proof of \Cref{lem: bound for A_{L+1}}]
    To prove this relation, we claim that for any $a \in A_{L+1}$, 
    \begin{align*}\alpha w_a &\geq w(\{ o \in \Snear_i: N(o, A_i) = a\}) \quad \text{ and } \\ \quad\alpha w_a &\geq w(\{ o \in \Sfarst_i: N(o, A_i) = a\}).\end{align*}
    We show the first inequality. Fix any $i \in \mathcal{J}$ and consider $o \in \Snear_i$. 
    Notice that since $N(o, A_i)$ is not evicted after round $i$, we know that $N(o, A_i) = \{a\} \in A_{L+1}$. 
    For any $a \in A_{L+1}$, there can be at most one edge $o \in \Snear_i$ such that $N(o, A_i) = \{a\}$. Indeed, suppose by contradiction that there exists $o_1, o_2 \in \Snear_i$ with neighborhood $\{a\}$. Then, because $N(o_p, A_i) \subseteq (m_{i+1}, m_i]$ for $p \in \{1, 2\}$, we have that $N(o_p, A_{i-1}') = \emptyset$, so both edges are considered by $\textsc{Unweighted } \LS$ when applying $\textsc{Interval } \LS(A_{i-1}, (m_{i+1}, m_i])$. In particular, the swap $A_i - a + \{ o_1, o_2\}$ would be improving in size, which is a contradiction to termination of $\textsc{Interval } \LS(A_{i-1}, (m_{i+1}, m_i])$. This implies that 
    \begin{align*}
        \alpha w_a \geq \alpha m_{i+1} &= \alpha\frac{m_i}{2} \cdot 1 
        \geq \alpha\frac{m_i}{2} \cdot |\{o \in \Snear_i: N(o, A_i) = a\}|
        \geq w(\{o \in \Snear_i: N(o, A_i) = a\}),
    \end{align*} 
    where the last inequality follows from $w_o \leq m_i$ for $o \in S_i$ and assuming $\alpha \geq 2$.\\

    We then show the second inequality. First, notice that by definition, every $o \in \Sfarst_i$ satisfies $\card{N_v(o, A'_i)} =1$ and $N(o, A'_i) \subseteq A_{L+1}$ as the edge does not get evicted.
    Then fix any $a \in A_{L+1}$ and notice that the swap $A'_i - a + \{ o \in \Sfarst_i: N(o, A'_i) = a\}$ is a feasible swap for $A'_i$. By termination of $(m_{i+1}, \alpha)$-$\LS$ with solution $A'_i$, the swap cannot be $\alpha$-improving such that $\alpha w_a \geq w(\{ o \in \Sfarst_i: N(o, A'_i) = a\}) = w(\{ o \in \Sfarst_i: N(o, A_i) = a\})$. We are now able to prove the lemma using these two inequalities. \\

    Notice that for any $a \in A_{L+1}$, and any $o \in S_i$ such that $N(o, A_i) = a$, the weight of $a$ must satisfy $w_a \leq 3 w_o$ by definition of $S_i$. Lastly, since $o \in S_i \subseteq (m_{i+1}, m_i]$, we have $w_o \leq m_i = W\tau2^{-i}$. Then,
    \begin{align*}
        w_o & \leq W\tau2^{-i} 
        = \frac{(W\tau2^{-i}) w_a}{w_a}
        \leq \frac{(W\tau2^{-i}) 3w_o}{w_a}.
    \end{align*}
    A reordering shows that $i \leq  \log_{2}\left( \frac{3W\tau}{w_a} \right)$. Moreover, we know that $w_o \leq m_i = 2m_{i+1}\leq  2w_a$ and $w_o \geq m_{i+1} = W\tau2^{-(i+1)}$ so that $i \geq \log_{2}(\frac{W\tau}{4w_a})$. Thus for every $o \in S$ that has $a$ as a neighbor,  the edge $o$ must lie within a bucket $i \in [l_a, r_a]$ where $l_a = \log_{2}\left( \frac{W\tau}{4w_a} \right)$ and $r_a = 
    \log_{2}(\frac{3W\tau}{w_a})$.
    Then,
    \begin{align*}
        w(\Snear \cup \Sfarst) &= \sum_{a \in A_{L+1}}\sum_{i\in \mathcal{J}} w(\{ o \in \Snear_i \cup \Sfarst_i: N(o, A_i) = a\}) \\
        & \leq \sum_{a \in A_{L+1}}\sum_{i \in [l_a, r_a]} w(\{ o \in \Snear_i : N(o, A_i) = a\}) + w(\{ o \in \Sfarst_i: N(o, A_i) = a\})\\
        & \leq \sum_{a \in A_{L+1}}\sum_{i \in [l_a, r_a]} 2\alpha w_a \\
        & = \sum_{a \in A_{L+1}} \left(\log_{2}\left( \frac{3W\tau}{w_a}\right) - \log_{2}\left(\frac{W\tau}{4w_a}\right) \right) 2\alpha w_a \\
        & = \log_{2}(12) \cdot 2\alpha w(A_{L+1}) \\
        & \leq 8\alpha w(A_{L+1}),
    \end{align*}
    where the first equality follows by definition of $S$ and $N(o, A_i) \subseteq A_{L+1}$ for $o \in S_i$ concluding the lemma. \qedhere
 \end{proof}

\subsubsection{Analyzing the contribution of doubles}
    We show that $A_{L+1}$ has a good approximation guarantee relative to the set of doubles $D$. 
    As in the previous part, we separate the analysis into opt-edges whose neighborhood is within their own weight class and those that contain a neighbor in a prior class. 
    The key is that the former case can be analyzed via an unweighted analysis akin to \cite{Cygan:2013:Improved}, while the latter can be analyzed according to the random weight bucketing argument from \cite{Singer-Thiery:2025:Better}. 
    We therefore split $D_i$ into two different sets as follows:
    \begin{enumerate}
        \item Let $\Dnear_i \triangleq \{o \in D_i : N(o, A_i) \subseteq (m_{i+1}, m_i]\}$.
        \item Let $\Dfar_i \triangleq \{o \in D_i: N(o, A_i) \not\subseteq (m_{i+1}, m_i]\}$.
    \end{enumerate}

We first show that $A_{L+1}$ is an $O(1)$ approximation relative to $\Dnear$. 
\begin{mylemma}\label{lem: k/3 approx for doubles}
    The final solution $A_{L+1}$ satisfies 
    $$\frac{1}{2(1+\e)} w(\Dnear)\leq w(A_{L+1}) \enspace.$$
\end{mylemma}

\begin{proof}[Proof of \Cref{lem: k/3 approx for doubles}]
    At any iteration $i \in \mathcal{J}$, $\textsc{Interval Local-Search}$ applies Cygan's algorithm \cite{Cygan:2013:Improved} ($\textsc{Unweighted } \LS$ in \Cref{alg:interval-local-search}) on a set of edges $E'$ (treated as unweighted) and outputs a solution $A'$.
    We use the following claim on the output $A'$ paraphrased from \cite{Cygan:2013:Improved}:

\begin{mylemma}[\cite{Cygan:2013:Improved}]\label{claim: logn swap size}
    Let $W$ be any feasible solution, and let $A'$ be the output of the bounded pathwidth local search algorithm of \cite{Cygan:2013:Improved} with parameter $\e$. Let $W_s \triangleq \{w \in W: |N_v(w, A')| = 1\}$ and $W_d \triangleq \{w \in W: |N_v(w, A')| = 2\}$. Then, the following relations hold:
    \begin{enumerate}
        \item $|W_s| \leq \e |N(W, A')|$,
        \item $|W_d| \leq (1+\e) |N(W, A')|$.
    \end{enumerate}
\end{mylemma}

\Cref{claim: logn swap size} shows that a single run of the \textsc{Interval Local Search} (Cygan's algorithm) will result in a small amount of singles and doubles over that unweighted sub-instance. We will use this to show that $\Dnear$ is not much larger than its neighborhood.
For each $i \in \mathcal{J}$, define the following set of neighbors of $\Dnear_i$:
 $$\Anear_i \triangleq \{ a \in A_{L+1}: a \in N(\Dnear_i, A_i)\}.$$

We claim that $|\Anear_i| \geq \frac{1}{1+\e}|\Dnear_i|\label{eq: size A lasting}.$ To see that this is true, let $\Dnear_i$ be a feasible set packing solution and note that $A_i$ is the output of $\textsc{Interval Local Search}(A'_{i-1}, (m_{i+1}, m_i])$ which runs the algorithm of \cite{Cygan:2013:Improved} over the interval $I_i = (m_{i+1}, m_i]$ where notice that $A'_{i-1}\cap I_i = \emptyset$. Since $\Dnear_i$ is a subset of edges that are contained in interval $I_i$ and whose neighborhood is contained in $I_i$, then $\Dnear_i$ is a feasible solution of the algorithm $\textsc{Interval Local Search}(A'_{i-1}, (m_{i+1}, m_i])$. Moreover, since $|N_v(o, A_i)| = |N_v(o, \Anear_i)| =2$ for each $o \in \Dnear_i$, then applying \Cref{claim: logn swap size} with $W$ as $\Dnear_i$ and $N(\Dnear_i, A_i) = \Anear_i$ shows that $|\Dnear_i| \leq (1 + \e)|N(\Dnear_i, A_i)| = (1+\e) |\Anear_i|$ as desired.
Then since $(1+\e)|\Anear_i| \geq |\Dnear_i|$, we get 
\begin{align*}
    w(\Anear_i) & \geq m_{i+1} |\Anear_i| 
    \geq \frac{m_{i+1}}{1+\e} |\Dnear_i| 
    \geq  \frac{m_{i+1}}{m_{i}(1+\e)} w(\Dnear_i),
\end{align*}
so that for each $i \in [L]$, we have $w(\Anear_i) \geq \frac{1}{2(1+\e)} w(\Dnear_i)$. We then conclude by noting that for each $i \in [L]$, $\Anear_i$ is a disjoint set so that $w(\Anear) = \sum_{i \in [L]} w(\Anear_i) \geq \sum_{i \in [L]} \frac{1}{2(1+\e)}w(\Dnear_i) = \frac{1}{2(1+\e)} w(\Dnear)$. The proof finishes by observing that by definition of $O_i$, the edges satisfy $\Anear \subseteq A_{L+1}$.  \qedhere
    \end{proof}

\subsubsection{Approximation guarantee of $T$, $\Dfar$, and $\Sfarunst$}
\label{sec: trip, d far, s unstable}

Finally, it remains to analyze the triples $T$, the doubles in $\Dfar$ that contain a neighbor in a prior interval, and the singles in $\Sfarunst$ whose neighborhood changes during $\alpha$-$\LS$. For each $o \in T \cup \Dfar \cup \Sfarunst$, we denote by $m_o$ the closest marker to $w_o$ such that $m_o \geq w_o$.
The contribution of these edges depends on the distance between $w_o$ and $m_o$ which can be computed directly from the distribution $\tau$.
Towards this end, for every $o \in O$ we define the ratio of the item's weight to its nearest marker as $r_o \triangleq \frac{m_o}{w_o}$.
Intuitively, the larger the ratio is, the better the approximation. As shown in the next lemma, the guarantees directly depend on $r_o$.
    \begin{mylemma}
        \label{lem:improved-gap-triples}
        For any $i \in \mathcal{J}$ and any $o \in (\Dfar_i \cup T_i \cup \Sfarunst_i)$, we have
        \begin{align*}
            1.5r_o\cdot w_o & \leq c(N_v(o, A_i)) \quad \text{ for }o \in \Dfar_i \cup T_i \\   1.5r_o\cdot w_o  & \leq c(N_v(o, A'_i))\quad \text{ for }o \in \Sfarunst_i.
        \end{align*}
    \end{mylemma}

    \begin{proof}[Proof of \Cref{lem:improved-gap-triples}]
        Suppose first that $o \in T_i$ for some $i \in \mathcal{J}$. By definition of $T_i$, we have that either $c(N_v(o, A_i)) \geq 3 w_o \geq 3m_{i+1}$ or $c(N_v(o, A_i)) \geq 3 \cdot \min_{a \in A_i}w_a \geq 3m_{i+1}$.
        Since $o \in I_i$, the nearest marker is $m_{i}$ such that $m_o = m_{i}$. This means that $w_o = \frac{m_{i}}{r_o}$ by definition of $r_o$. Since $w_o \geq m_{i+1}$, we get that
            \begin{align*}
                w_o & = \frac{w_o}{c(N_v(o, A_i))}  c(N_v(o, A_i))
                \leq  \frac{m_{i}}{3 m_{i+1}r_o} c(N_v(o, A_i))
                = \frac{2}{3 r_o} c(N_v(o, A_i)),
            \end{align*}
            where the last equality follows from $m_{i+1}/m_{i} = 1/2$ for any $i$. This proves that \Cref{lem:improved-gap-triples} holds for $o \in T$ by rearranging.\\

        Next suppose that $o \in \Dfar_i$ for any $i \in \mathcal{J}$. By definition, any $o \in \Dfar_i$ satisfies $N(o, A_i) \not\subseteq (m_{i+1}, m_{i}]$ and $\card{N_v(o, A_i)} = 2$. Without loss of generality, let $v_1, v_2$ be the two vertices in $N_v(o, A_i)$ such that $c(v_1) \geq m_{i}$ and $c(v_2) \geq m_{i+1}$. Since $o \in I_i$, the nearest marker is $m_{i} = m_o$. This means that $w_o = \frac{m_{i}}{r_o}$ and $w_o \geq m_{i+1}$.
            Therefore,
            \begin{align*}
                w_o 
                & =  \frac{m_{i}}{r_o(c(v_1) + c(v_2))} c(N_v(o, A_i)) 
                \leq  \frac{m_{i}}{r_o(m_i + m_{i+1})} c(N_v(o, A_i))
                = \frac{1}{1.5r_o} c(N_v(o, A_i)),
            \end{align*}
            where the last equality follows from $m_{i+1}/m_{i} = 1/2$ for $i \in \mathcal{J}$. This proves that \Cref{lem:improved-gap-triples} holds for $o \in \Dfar$.\\

            Finally, suppose that $o \in \Sfarunst_i$ for any $i \in \mathcal{J}$. By definition, any $o \in \Sfarunst_i$ satisfies $N(o, A_i) \neq N(o, A'_i)$. Moreover, since $N(o, A_i)$ is never evicted in round $i$ and onwards, it must be that $N(o, A_i) \subset N(o, A'_i)$ such that $\card{N_v(o, A'_i)}\geq 2$. In particular, let $v_1, v_2$ be two vertices in $N_v(o, A'_i)$ with $v_1 \in N_v(o, A_i)$. By definition of $\Sfarunst_i$, we have $c(v_1) \geq m_i$ since $N(o, A_i)\not\subseteq (m_{i+1}, m_i]$. Therefore, identically to the proof of $o \in \Dfar_i$, we get that 
            \begin{align*}
                w_o 
                & \leq  \frac{m_{i}}{r_o(c(v_1) + c(v_2))} c(N_v(o, A'_i)) 
                \leq  \frac{m_{i}}{r_o(m_i + m_{i+1})} c(N_v(o, A'_i))
                = \frac{1}{1.5r_o} c(N_v(o, A'_i)),
            \end{align*}
            concluding the lemma.
    \end{proof}
    Note that for every $o \in O$, $r_o$ is a random value as it depends on the outcome $\tau$ which determines the markers $m_0, \hdots, m_{L+1}$. We would like most elements to have a large ratio $r_o$ as this guarantees that the factor difference in weight to its neighborhood accounts for a large multiplicative improvement. 
    This fact directly follows from Feldman and Ward \cite{Feldman-Ward:2026:Submodular}.
    \begin{mylemma}[\protect{\cite[Lemma~B.1]{Feldman-Ward:2026:Submodular}}]\label{lem: prob of Rj}
        For any $o \in O$, $\mathbb{E}[r_o] = \ln^{-1}(2)$.
    \end{mylemma}

    Combining \Cref{lem: prob of Rj} and \Cref{lem:improved-gap-triples}, we bound the value of $\Dnear \cup T \cup \Sfarunst$ relative to $A_{L+1}$.
    \begin{mylemma}\label{lem: main apx triples and goods}
        Let $G \triangleq \Dfar \cup T \cup \Sfarunst$. Then, the following equation holds:
        \begin{align*}
            \esp{w(G)} \leq \frac{\ln(4)}{3}\cdot k \cdot \mathbb{E}[w(A_{L+1})] + \esp{w(O \setminus G)}.
        \end{align*}
    \end{mylemma}
    \begin{proof}[Proof of \Cref{lem: main apx triples and goods}]
        For simplicity, denote $G \triangleq \Dfar \cup T \cup \Sfarunst$. For any $o \in O$, let $H_o \triangleq \bm{1}\lc o \in G \rc$ be the indicator of whether $o \in G$.
        Observe that $\esp{w(G)} = \sum_{o \in G} w_o \esp{H_o}$. We would like to apply \Cref{lem: prob of Rj} but need to be careful due to correlation between $H_o$ and $r_o$. We use the following inequality:
        \begin{align}
            \esp{H_o} & \leq \ln(2) \esp{r_o H_o} + \esp{1 - H_o}. \label{eq:correlation}
        \end{align}
        This is a valid inequality, since it is equivalent to $\esp{H_o(2 - \ln(2)r_o)} \leq 1$, which is true because $H_o \leq 1$ and $2 - \ln(2) r_o \geq 0$ for all $r_o \in [1, 2)$. In particular, we have
        \begin{align*}
            \esp{H_o(2 - \ln(2)r_o)} & \leq \esp{2 - \ln(2) r_o} = 2 - 1 = 1,
        \end{align*}
        where we used \Cref{lem: prob of Rj} for the second equality. 
        Applying \Cref{eq:correlation}, we get that
        \begin{align*}
            \esp{w(G)} & = \sum_{o \in G} w_o \esp{H_o} \leq \ln(2) \sum_{o \in G} w_o \esp{r_o H_o} +  \sum_{o \in G} w_o \esp{1 - H_o} \\
            & = \ln (2) \esp{ \sum_{o \in G} w_o r_o} + \esp{\sum_{o \notin G} w_o} \\
            & = \ln (2) \esp{ \sum_{o \in G} w_o r_o} + \esp{w(O \setminus G)}.
        \end{align*}
        Next use \Cref{lem:improved-gap-triples} to bound the first term on the right-hand side.
        By \Cref{lem:improved-gap-triples}, for every $i \in \mathcal{J}$ and for any $o \in T_i\cup \Dfar_i$, we have that $1.5 r_o\cdot w_o \leq c(N_v(o, A_i))$ and for $o \in \Sfarunst_i$, we have that $1.5 r_o\cdot w_o \leq c(N_v(o, A'_i))$. Summing this relation over $i \in \mathcal{J}$, we get that 
        \begin{align*}
            \displaystyle\sum_{o \in G} 1.5r_o w_o & \leq \displaystyle\sum_{i \in \mathcal{J}} \displaystyle\sum_{o \in T_i\cup\Dfar_i} c(N_v(o, A_i)) + \displaystyle\sum_{o \in \Sfarunst_i} c(N_v(o, A'_i)) \\
            & \leq \displaystyle\sum_{i \in \mathcal{J}}\displaystyle\sum_{o \in G_i} c(N_v(o, A_{L+1}) ) \\
            & \leq c(v(A_{L+1}))\\
            &\leq kw(A_{L+1})
        \end{align*}
        where the second inequality follows from the fact that $N_v(o, A_i) \subseteq N_v(o, A_{L+1})$ and $N_v(o, A'_i) \subseteq N_v(o, A_{L+1})$ for every $o \in O\setminus \Orem$. The third inequality uses that $N_v(o, A_{L+1}) \cap N_v(o', A_{L+1}) = \emptyset$ for all $o' \neq o$. The last inequality is due to the fact that every hyperedge contains at most $k$ vertices. 
        Combining the equations together, we finally obtain that
        \begin{align*}
            \esp{w(G)} & \leq \frac{\ln(4)}{3}\cdot k \cdot \mathbb{E}[w(A_{L+1})]  + \esp{w(O\setminus G)}. \qedhere
        \end{align*}
\end{proof}

\subsubsection{Putting the pieces together}
We are now able to conclude our main theorem which shows that for a choice of $\alpha$ an arbitrarily large constant, \Cref{alg:merge-LS-setpacking} improves over Neuwohner's $\frac{k}{2.0056} + O(1)$-approximation algorithm.
\setpackapprox*
    \begin{proof}[Proof of \Cref{thm:APX-better}]
        Let $G \triangleq \Dfar \cup T \cup \Sfarunst$. For any outcome of $\tau$, we have that
        \begin{align}
            w(O) & = w(\Orem) + w(\Snear \cup \Sfarst) + w(\Dnear) + w(\Dfar \cup T \cup \Sfarunst)  \notag
        \end{align}
    since these sets are defined to partition $O$.
    We first bound the first 3 terms. By \Cref{lem:removed-setpack}, \Cref{lem: bound for A_{L+1}} and \Cref{lem: k/3 approx for doubles}, 
    \begin{align*}
        w(O \setminus G) = w(\Orem) + w(\Snear \cup \Sfarst) + w(\Dnear) & \leq \left( \frac{2k}{(\alpha-1)} + {8\alpha} + 2(1+\e)\right) w(A_{L+1}) %\label{eq: approx all other}
    \end{align*}
    and for $\alpha = \Theta(\sqrt{k})$, this shows that 
    \begin{align*}
        w(\Orem) + w(S) + w(\Dnear) & \leq O(\sqrt{k}) w(A_{L+1}). \label{eq: approx all other}
    \end{align*}
    Plugging in \Cref{lem: main apx triples and goods} with an upper bound for $w(G)$ yields the result.
    \end{proof}

    \section{Conclusion and Open Questions}
    The main contribution of this paper is a new, arguably simple algorithm for monotone submodular maximization over a matroid $k$-parity constraint. It applies a local-search procedure to edges of weights above some threshold, which is decreased continuously, so that smaller value edges are taken in favor of large value edges only if they improve the solution greatly.
    The guarantee of our algorithm asymptotically matches that of the state-of-the-art algorithm in the unweighted setting. We obtain a $\frac{k}{2} + o(k)$-approximation for monotone submodular maximization subject to a matroid $k$-parity constraint. This improves over the $\frac{\ln(4)k}{1 + \ln(2)} + o(k)$-approximation by Feldman and Ward in the same setting, and approaches, albeit an additive $o(k)$-loss, the $\frac{k}{2} + \e$-approximation guarantee by Lee, Sviridenko, and Vondr{\'{a}}k in the unweighted setting.
    We use this algorithm in the special case of weighted $k$-Set Packing. We show that a combination of our algorithm and that of Singer and Thiery  \cite{Singer-Thiery:2025:Better} allows us to obtain a $\frac{\ln(4)k}{3}+o(k)$-approximation. This improves over the prior state of the art  $\frac{k}{2.00561}+O(1)$-approximation algorithm \cite{Neuwohner:2023:Passing}. \\

    Our work leaves some interesting open questions to explore. Our results demonstrate that it might be easier to obtain improved guarantees for large values of $k$. Can one also obtain a $k/3 + o(k)$-approximation for weighted $k$-Set Packing to match the unweighted guarantee? 
    Another natural promising avenue is to understand low values of $k$. For (weighted) $3$-matroid intersection, the state-of-the-art approximation guarantee is $2$ (due to a $k-1$ approximation for all $k$) by Lee et al. \cite{Lee:2010:Submodular}. For $3$-Set Packing, this bound can be surpassed \cite{Neuwohner:2021:Improved, Thiery:2023:Approximation, Thiery:2023:Approximation} but is still far from the $3/2$ approximation in the unweighted setting, which is a natural barrier. Improving on either of these results would be very interesting.
    Finally, we ask whether there are \emph{black-box} reductions from the weighted to unweighted setting as done recently in other computational models over matching and matroid intersection (see e.g. \cite{Bernstein-Dudeja-Langley:2021:Framework,Huang-Kakimura-Kamiyama:2019:Exact,Dudeja-Grilnberger:2026:Unweighted}).

    \section*{Acknowledgements}
        We thank Neil Olver who participated in many of the discussions and progress that led up to and inspired the present algorithm.
    
    \section*{Tool and Computational Resource Disclosure}
        All the core ideas in this paper, including the algorithms and the proof strategies, were developed without the use of Large Language Models. Chat-GPT was used to obtain a simplification of a complex case-by-case analysis that the authors originally employed. This simplification was adapted into \Cref{lem:single-occurences}, which was independently extended and verified by the authors. All writing was done by the authors.

    \bibliography{biblio}
    \bibliographystyle{alpha}
    \appendix
    \section{Matroid Exchanges}
    \label{sec:appendix-matroid}
    We analyze the quality of the algorithm's solution by constructing non-improving local swaps between the intermediary solution $A_\tau$ for any $\tau \in [\btheta, W]$ and the optimal solution $O$.
    It will be helpful to define matroid exchanges over the vertices covered by solutions $A_\tau$ and $O$, before constructing feasible swaps between hyperedges of $A_\tau$ and $O$.
    \begin{restatable}{mylemma}{nestedsets}
        \label{thm:nested-sets}
            Let $A_\tau, X_\tau$ for any $\tau \in [\btheta, W]$. There exists a set $B \subseteq v(O)$ such that $|B| = |v(A_\tau)|$ with the property that $v(A_\tau) \cup (v(O) \setminus B) \in \cI$.
            Moreover, for any $o \in O$, such that $f(o \mid X_\tau - o) \geq \tau$, we have that $v(o) \cap B \neq \emptyset$.
    \end{restatable}
    The set $B$ should be thought of as the set of vertices in $v(O)$ that are conflicting with $A_\tau$.
    \begin{proof}[Proof of \Cref{thm:nested-sets}]
            Without loss of generality, we may assume that, we have $\card{v(O)} \geq \card{v(A_\tau)}$ by adding disjoint edges with $0$ marginal contribution, each containing $k$ new dummy vertices independent with the rest of $O$. Then, the first part of the proposition follows from the matroid extension property.
            To prove the second statement, we suppose towards contradiction that $v(o) \cap B = \emptyset$. This implies that:
            \begin{align*}
                v(A_\tau) \cup v(o) & \subseteq v(A_\tau) \cup v(o) \cup ( v(O) \bb B ) = v(A_\tau) \cup ( v(O) \bb B ) \in \cI.
            \end{align*}
            So $v(o)$ can be added to $v(A_\tau)$ without violating the independence constraint. Since $f(o \mid X_\tau-o) \geq \tau$, the pair $(\{o\}, \emptyset)$ is a $(\tau, \alpha)$-improvement for $A_\tau$. This contradicts \Cref{claim:no tau swaps}.
    \end{proof}
    We also restate known results about matroid-basis exchanges
    \begin{restatable}{mythm}{Rota-setminus}[Proposition 5 in \cite{Lason:2015:List-Coloring}]
            \label{thm:Rota-setminus}
            Let $A$ and $B$ be two bases of a matroid $\cM$. Then for any partition $B_1, \sqcup \ldots \sqcup B_m = B$ there exists a partition $A_1 \sqcup \ldots \sqcup A_m = A$ such that
            \begin{align*}
                (B \setminus B_i) \cup A_i \mbox{ are bases for all } i \in [m].
            \end{align*}
        \end{restatable}

    \begin{restatable}{mythm}{GeneralizedRota}
    [Proposition 6 in \cite{Lason:2015:List-Coloring}]
            \label{thm:Rota-non-basis-2}
            Let $A$ and $B$ be two bases of a matroid $\cM$.
            For any partition $B_1, \ldots, B_m$ of $B$, there are disjoint sets $A_1, \ldots, A_m \subseteq A$ such that 
            \begin{align*}
                (A \setminus A_i) \cup B_i & \mbox{ are bases for all } i \in [m].
            \end{align*}
        \end{restatable}

    \subsection{Constructing a Conflict Graph}
    We use the above matroid exchanges to construct a bipartite conflict graph between $O_{\geq \tau}$ to $A_\tau$. The conflict graph is comprised of feasible hyperedge swaps between $O_{\geq \tau}$ to $A_\tau$ and therefore satisfies local swap guarantees. We restate the lemma here.
    \ConflictGraph*
    \begin{proof}[Proof of \Cref{thm:decomposition-ith-round-2}]
        The proof proceeds by finding a decomposition of $A_\tau$ and $O$ into swaps. We will obtain a collection of pairs $\{(o, N_o)\}_{o \in O}$ where $v(o)$ and $N_o$ form a matroid exchange and $\{N_o\}_{o \in O}$ partitions $v(A_\tau)$. We then simply construct the bipartite conflict  graph by placing an edge between $o$ and each $v \in N_o$. 
        By \Cref{thm:nested-sets}, there exists a set $B \subseteq v(O)$ of vertices conflicting with $A_{\tau}$ such that $v(A_\tau) \cup (v(O) \setminus B) \in \cI$. 
        By \Cref{thm:nested-sets}, each $o \in O_{\geq \tau}$ appears in either $O_s$ or $O_d$.
        Let $O_s \triangleq \{ o \in O_{\geq \tau} \colon |v(o) \cap B| = 1\}$ and $O_d \triangleq \{ o \in O_{\geq \tau} \colon |v(o) \cap B| > 1\}$ and $O_b \triangleq O \setminus (O_s \cup O_d)$ be the sets of single conflicts, double conflicts, and remaining edges. 
        Let $\cM'$ be the matroid contracted on $v(O) \setminus B$. Clearly, the sets $v(A_\tau)$ and $B$ are both independent in $\cM'$.
        We partition $B = B_s \sqcup B_d \sqcup B_b$ into three sets where $B_s = v(O_s) \cap B$, $B_d = v(O_d) \cap B$ and $B_b = v(O_b) \cap B$. 
        By \Cref{thm:Rota-non-basis-2}, for each $q \in \{s, d, b\}$ there exists $N_q \subseteq v(A_{\tau})$ with the property that $|N_q| = |B_q|$ and $(v(A_\tau) \setminus N_q) \cup B_q$ is independent in $\cM'$.
        In particular, this means that $(v(A_\tau) \setminus N_q) \cup B_q \cup (v(O) \setminus B) \in \cI$.

        \paragraph{Constructing swaps over $\boldsymbol{O_s}$.}
        We construct a partition of $N_s$ by defining for every $a \in A_\tau$ the set $N_{s, a} \triangleq N_s \cap v(a)$ (some of which are empty). Let $\cM_s$ be the matroid contracted on $v(A_\tau) \setminus N_s \cup (v(O) \setminus B)$ and observe that $B_s$ and $N_s$ are both independent in $\cM_s$ with $|N_s| = |B_s|$. We apply \Cref{thm:Rota-setminus} with the partition $\{N_{s, a}\}_{a \in A_\tau}$ to obtain a partition of $B_s$ into sets $\{B_{s, a}\}_{a \in A_\tau}$ such that
        \begin{align*}
            (N_s \setminus N_{s, a}) \cup B_{s, a} & \in \cM_s
        \end{align*}
        so that $(v(A_\tau) \setminus N_{s, a}) \cup B_{s, a} \cup (v(O) \setminus B) \in \cI$ for any $a \in A_\tau$. This matroid exchange guarantees a feasible swap between the hyperedges $a$ and $e(B_{s, a})$, i.e. that $A_\tau - \{a\} + e(B_{s, a})$ is a feasible solution for every $a \in A_\tau$. Indeed, the underlying vertices form an independent set in $\Ical$ since
        \begin{align*}
            v(A_\tau) \setminus v(a) \cup v(e(B_{s, a})) & \subseteq v(A_\tau) \setminus N_{s, a} \cup v(e(B_{s, a})) \cup ( v(O) \setminus B) \\
            & \subseteq \lb v(A_\tau) \setminus N_{s, a} \cup B_{s, a} \rb \cup (v(O) \setminus B),
        \end{align*}
        where the last inequality follows from the fact $v(e(B_{s, a})) = (v(e(B_{s, a})) \cap B) \cup (v(e(B_{s, a})) \setminus B)$ and as $v(e(B_{s, a})) \cap B = B_{s, a}$, and $v(e(B_{s,a})) \setminus B \subseteq v(O) \setminus B$.
        Let $e(B_{s, a}) = \{o_1 \prec \ldots \prec o_\ell\}$ be the ordering of the edges in $e(B_{s, a})$ in $O$ (definition in \Cref{sec: analysis of algo 2}).
        We have that each $o_j \in e(B_{s, a})$ satisfies $\ow_\tau(o_j) \geq \tau$ by definition of $O_{\geq \tau}$. Thus, by submodularity, we have that $f(o_j \mid X_\tau - \{o_j\} \cup \{o_1, \ldots, o_{j-1}\}) \geq \ow_\tau(o) \geq \tau$ for all $j \in [\ell]$.
        By \Cref{claim:no tau swaps}, there are no improving $(\tau, \alpha, X_\tau)$-swaps on $A_\tau$, so that
        \begin{align}
            \sum_{o \in e(B_{s, a})} \ow_\tau(o) & \leq \sum_{j \in [|e(B_{s, a})|]} f(o_j \mid X_\tau - \{o_j\} \cup \{o_1, \ldots, o_{j-1}\}) \notag\\
            & \leq f(e(B_{s, a}) \mid X_\tau - e(B_{s, a})) \notag\\
            & \leq \alpha w_{\prec_\tau}(a, A_\tau). %\label{eq:single-at-most-alpha}
        \end{align}
        \Cref{item:decomp-single-2} then follows by setting $T_a = e(B_{s, a})$ for each $a \in A_\tau$, and since $\{B_{s, a}\}_{a \in A_\tau}$ forms a partition of $B_s$. 

        \paragraph{Constructing the individual swaps of \cref{item:decomp-swap-2}.}
        We first define the edge swaps $\{o, N_o\}$ for each $o \in e(B_{s, a})$. Consider the matroid $\cM_{s, a}$ contracted on $(v(A_\tau) \setminus N_{s, a}) \cup (v(O) \setminus B)$, and observe that $N_{s, a}$ and $B_{s, a}$ are both independent sets in $\cM_{s, a}$.
        Partition $B_{s, a, o} \triangleq B_{s, a} \cap v(o)$ for $o \in e(B_{s,a})$.
        We apply \Cref{thm:Rota-non-basis-2} with the partition $\{B_{s, a, o}\}_{o \in e(B_{s, a})}$ of $B_{s, a}$ to get a partition of $N_{s, a}$ into $\{N_{s, a, o}\}_{o \in e(B_{s, a})}$ such that $N_{s, a} \setminus N_{s, a, o} \cup B_{s, a, o}$ is independent in $\cM_{s, a}$ for each $o \in e(B_{s, a})$.
        For every $a \in A_\tau$ and $o \in e(B_{s, a})$, we can then set where $N_o \leftarrow N_{s, a, o}$ to conclude \cref{item:decomp-swap-2} for all $o \in O_s$. \\
    
        We repeat the same procedure for $O_d$ and $O_b$. Recall that $\cM'$ is the matroid contracted on $v(O) \setminus B$, and that for each $q \in \{d, b\}$ we have $N_q \subseteq v(A_\tau)$ with the property that $|B_q| = |N_q|$ and such that $(v(A_\tau) \setminus N_q) \cup B_q$ is independent in $\cM'$.
        Let $\cM''$ be the matroid contracted on $(v(A_\tau) \setminus N_q) \cup (v(O) \setminus B)$ and observe that $N_q$ and $B_q$ are independent sets in $\cM''$ of equal size.
        For each $o \in O_q$, we define the sets $B_{q, o} \triangleq B_q \cap v(o)$ and apply \Cref{thm:Rota-non-basis-2} with the partition $\{B_{q, o}\}_{o \in O}$ of $B_q$. This implies that there exists a partition of $N_q$ into $\{N_{q, o}\}_{o \in O}$ such that $N_q \setminus N_{q, o} \cup B_{q, o} \in \cM''$.
        Thus $(v(A_\tau) \setminus N_{q,o}) \cup B_{q, o} \cup (v(O) \setminus B)$ is independent in $\cM$. Observe that $e(B_{q, o}) = \{o\}$ and
        \begin{align*}
            v(A_\tau) \setminus N_{q, o} \cup v(o) & \subseteq v(A_\tau) \setminus N_{q, o} \cup v(o) \cup (v(O) \setminus B) \\
            & \subseteq v(A_\tau) \setminus N_{q, o} \cup B_{q, o} \cup (v(O) \setminus B) \in \cI,
        \end{align*}
        where the last inequality follows from the fact $v(o) = (v(o) \cap B) \cup (v(o) \setminus B)$ and that $v(o) \cap B = B_{q, o}$.
        This concludes \cref{item:decomp-swap-2} by setting $N_o = N_{q, o}$ for each $o \in O_d \cup O_b$. \\

        The proof of \Cref{item:decomp-deg-1} follows from the fact that $|N_o| = |v(o) \cap B|$ for all $o \in O$ by construction.
    \end{proof}

    \section{Polynomial Runtime}
\label{sec:runtime}
    We prove that \Cref{alg:merge-LS} runs in polynomial time. Each iteration of our algorithm requires finding a $(\theta_i, \alpha)$-improvement, where $\alpha > 1$.

    \begin{myprop}
        \label{prop:finding-improvement}
        For any $i \in [L]$, it is possible to find $(\theta_i, \alpha)$-improvement in polynomial time for constant $k$.
    \end{myprop}
    \begin{proof}
        [Proof of \Cref{prop:finding-improvement}]
        One can simply enumerate all sets $S$ of size at most $k$ and $N$ of size $k^2$ as well as all $k!$ orderings of $S$.
    \end{proof}

    \runtime*
    \begin{proof}[Proof of \Cref{lem:runtime-merge-LS}]
        The runtime of \Cref{alg:merge-LS} depends on the runtime of \Cref{alg:local-search} and the number of times the while loop is executed. We will prove that the number of times that the while loop is executed is bounded, this automatically implies the second part of the lemma.
        Observe first that finding a feasible $(\theta, \alpha)$-swap can be done in polynomial time (when $k$ is constant).
        We first claim that the total number of $(\theta, \alpha)$-swaps executed by \Cref{alg:merge-LS} is bounded.
        Let $m = \delta \frac{W}{n}$ be the smallest threshold considered by \Cref{alg:merge-LS}, where $n \triangleq |E|$.
        Consider an improving $(\theta, \alpha)$-swap $(S, N)$ and current solutions $A$ and $A' = A \setminus N \cup S$.
        The proof of \Cref{lem:discard-elements} shows that
        \begin{align*}
            f(A')-f(A)\geq (\alpha-1)\sum_{a\in N} w(a,A) \geq (\alpha - 1)|N|m.
        \end{align*}
        The last inequality follows from the fact that $\theta \geq m$ and that the incremental value $w(a, A)$ is monotonically increasing.
        In the case $N = \emptyset$, we have by submodularity $f(A') - f(A) \geq f(S \mid X \setminus S) \geq \theta \geq m$ by definition of $(m_i, \alpha)$-improvements.
        Therefore, every $(\theta, \alpha)$-improvement increases the objective value by at least $\min\{1, \alpha - 1\} m$.
        We now bound the total number of improvements across rounds. Let $O$ be an optimal solution, and let $W \triangleq \max_{e\in E\colon e \in \cI} f(e)$. Then, 
        \begin{align*}
            f(A) & \leq f(O) \leq \sum_{e \in O} f(e) \leq n W.
        \end{align*}
        Hence, there are at most $\frac{nW}{\min\{1, \alpha - 1\}m} = \delta^{-1} n^2$ improving $\alpha$-swap in total. \\

        This allows us to bound the number of times that the \emph{while-loop} is executed. The while loop is executed after each decrease of $\theta$. After each decrease, we execute $(\alpha, \theta)$-$\LS$.  By the above, the number of times an improvement is found is at most $\delta^{-1} n^2$. Thus, there are at most $\delta^{-1} n^2$ decreases of $\theta$ which correspond to a swap. Finally, observe that $n^{O(k)}$ consecutive threshold-decrease of $\theta$ without any improving swap terminates the algorithm.
        Thus, the runtime of \Cref{alg:merge-LS} is $O(\delta^{-1} n^{O(k)})$. 
    \end{proof}
    
\section{Approximation Guarantee Calculations}
\label{sec: appendix-alpha-calc}
We prove the calculations that give final approximation guarantee restated here. 
\alphacalc*

 \begin{proof}
        [Proof of \Cref{cor:approx-num}]
        Let's expand the terms of \Cref{thm:main-approx}. We have that
        $\alpha/(\alpha - 1) = 1 + 1/(\alpha - 1) = 1 + 1/\alpha + 1/(\alpha(\alpha - 1)) = 1 + \sqrt{\frac{\log(k)}{k}} + O(\frac{\log(k)}{k})$.
        Similarly, we have that $\log(k/\alpha) = \log(\sqrt{k \log(k)}) = 1/2 \log(k \log(k)) = \log(k)/2 + \log(\log(k))/2$.
        In particular, multiplying the previous term by $\alpha$, we get $\alpha \log(k/\alpha) = \sqrt{k \log(k)}/2 + \sqrt{k} \cdot \frac{\log(\log(k))}{2\sqrt{\log(k)}}$.
        Therefore,
        \begin{align*}
            & \quad\frac{\alpha}{\alpha - 1} \lb \frac{k}{2} + 1 + \delta + \alpha + \alpha \log(k/\alpha) \rb \\
            = & \quad\lb 1 + \sqrt{\frac{\log(k)}{k}} + O\lb \frac{\log(k)}{k} \rb \rb \! \lb \frac{k}{2} + 1 + \delta + \sqrt{\frac{k}{\log(k)}} + \frac{\sqrt{k \log(k)}}{2} + O \lb \sqrt{\frac{k}{\log(k)}} \log(\log(k))\! \rb\! \rb \\
        = & \quad k/2 + \sqrt{k \log(k)} + O \lb \sqrt{\frac{k}{\log(k)} \log(\log(k))} \rb \\
        = & \quad k/2 + \sqrt{k \log(k)} + o \lb \sqrt{k \log(k)} \rb
        \end{align*}
        for any choice of constant $\delta$.\qedhere
    \end{proof}

\end{document}